\numberwithin{equation}{section}
\newcommand{\R}{{\mathbb R}}
\newcommand{\Z}{{\mathbb Z}}
\newcommand{\TT}{{\mathbb T}}
\newcommand{\tb}{{\mathbf t}}
\newcommand{\rb}{{\mathbf r}}
\newcommand{\wt}[1]{\widetilde{#1}}
\newcommand{\wh}[1]{\widehat{#1}}
\newcommand{\reff}[1]{(\ref{#1})}
\newcommand{\E}{{\mathsf{E}}}
\newtheorem{theorem}{Theorem}[section]
\newtheorem{lemma}[theorem]{Lemma}
\newtheorem{proposition}[theorem]{Proposition}
\newtheorem{corollary}[theorem]{Corollary}
\newtheorem{definition}[theorem]{Definition}
\newtheorem{remark}[theorem]{Remark}
\theoremstyle{definition}
\title{Phase transition in ferromagnetic Ising model with a cell-board external field.}
\date{\today}
\author{Manuel Gonz\'alez-Navarrete\thanks{IME-USP, Rua do Mat\~{a}o, 1010, CEP 05508-090, S\~{a}o Paulo, Brazil. e-mail: manuelg@ime.usp.br}, \hspace{3pt} Eugene Pechersky\thanks{IITP, 19, Bolshoj Karetny per., Moscow, Russia. e-mail: pech@iitp.ru} \phantom{} and \phantom{a}Anatoly Yambartsev \thanks{IME-USP, Rua do Mat\~{a}o, 1010, CEP 05508-090, S\~{a}o Paulo, Brazil. e-mail: yambar@ime.usp.br}}
\begin{document}

\maketitle %

\vspace{-7pt}

\begin{abstract}
We show the presence of a first-order phase transition for a ferromagnetic Ising model on $\mathbb{Z}^2$ with a periodical external magnetic field. The external field takes two values $h$ and $-h$, where $h>0$. The sites associated with positive and negative values of external field form a cell-board configuration with rectangular cells of sides $L_1\times L_2$ sites, such that the total value of the external field is zero. The phase transition holds if $h<\frac{2J}{L_1}+ \frac{2J}{L_2}$, where $J$ is an interaction constant. We prove the first-order phase transition using the reflection positivity (RP) method. We apply a  key inequality which is usually referred to as the chessboard estimate. 

\medskip
\textbf{Keywords:} \textit{Ising model, periodic external field, Peierls condition, reflection positivity, phase transition.}
\end{abstract}

\section{Introduction}
\label{sec:zero}

In many models of statistical physics the phase transition is a result of spontaneous breaking of the symmetry of a system. The best known model with phase transition is the ferromagnetic Ising model (system) in the absence of a magnetic field. Essentially, this fact has been shown by Peierls \cite{Pei}. It has become a theorem by Griffiths \cite{Gr} and Dobrushin \cite{Dbr} (see also \cite{Si} and \cite{DiPP}). Peierls ideas are referred to as \textit{Peierls arguments} based on \textit{Peierls condition} and \textit{Peierls transformation}. Peierls condition means that  energy required  for a droplet formation of one of the phases surrounded by the sites  occupied by another phase is proportional to the size of the droplet boundary. For a two dimensional model (on $\Z^2$) the boundary size is the length of the droplet's boundary. The second component of Peierls arguments allows to perform Peierls transformation. It is based on a symmetry which  a studied model has. By Peierls transformation, it is possible to remove a  contour in the configuration such that only the energy of the contour is eliminated, and the energy of the rest part of the configuration is not changed. Peierls condition is unrelated to the model symmetry. Peierls condition is satisfied for the Ising model with a uniform external field, however there is no the symmetry in this case.  

Peierls arguments show a type of ``stability" of  ground states. It means that at a low temperature the state is ensemble of  small perturbations of the ground state which would result in a configuration ``close" to the starting ground state.

Unlike Peierls argument (specifically Peierls transformation), the \textit{Pirogov-Sinai theory} of phase transitions  allows one to find a low-temperature phase diagram of models with no symmetry requirement. When there is no a symmetry, the low-temperature  phase diagram is shifted with respect to the ground state diagram.

In addition, there exist a several more approaches. One such approach, \textit{Reflection Positivity} (RP), requires showing a type of reflection symmetry. Essentially, it is possible to prove a phase transition constructing a contour argument using the \textit{chessboard inequality} obtained from the RP property.

An external field added to the Hamiltonian can change the whole phase diagram. In the case of the ferromagnetic Ising model, any non-zero \underline{uniform} external field suppresses the phase transition. 
In some models where the magnetic field is not supposed to be uniform, it is possible to prove phase uniqueness, see for instance, \cite{Biss1}, \cite{Biss2}. A random external field can also suppress the phase transition in a planar Ising model (see \cite{Aiz1}, \cite{Aiz2}), even in the case when the total average of the external field is equal to 0.
 
In this paper we will address the problem of the existence of  phase transitions in a planar Ising model where the external field is periodic, forming a cell-board configuration such that total value of the magnetic field is zero. The initial motivation is coming from image processing  where Ising models with non-uniform external fields are used for analysing segmentation. The model in this study firstly were numerically studied by M.~Sigelle in \cite{Sig}. Reviews on the  applications of   Gibbs fields in  image processing can be found in Descombes and Zhizhina \cite{D&Z}   (see also \cite{MPS} and the book \cite{Win}). Posteriorly, Darbon and Sigelle \cite{D&S} proposed a grayscale fast and exact optimization method, it decomposing the target image in layers behaving as Ising models with cell-board external fields.

The models with staggered external fields can be useful in the theory of surfaces and domain theory of the solids.

In this work we consider the Ising model where the external field takes two values $h$ and $-h$, where $h>0$. The lattice $\mathbb{Z}^2$ is split into the union of disjointed cells of the same size, and the signs of the external field are alternated similar to a chessboard. Specifically, the cell with one sign of the external field is surrounded by four neighbor cells with the opposite value of the external field. We propose the reflection positivity method for the studies of this model. We will use a specific term for the alternated external field, {\it cell-board} partition, to avoid a confusion with the chessboard estimate, to be used later in the paper.

In \cite{NOZ} (by F.R. Nardi, E. Olivieri, and M. Zahradn\'{i}k) the authors study the case when the cells have infinite horizontal length, while their height equals one. Except the phase coexistence at low temperature, a lot of effort in \cite{NOZ} are focused on the proof of the uniqueness in a parameter region where the ground states coexist. 

Since our work is concentrated to the cases where Peierls condition is fulfilled then Pirogov-Sinai theory can be applied in this case. However, we used the reflection positivity method based on the periodicity of the cell-board external field. In this sense, as we will state in Corollary \ref{coro1}, a particular case of cell-board models  when the size of the cells is $1\times 1$, is trivially related with antiferromagnetic Ising model with uniform external field (see \cite{Dbr1}). For that model, in \cite{Fro2} the RP method has been used to prove the phase transition showed by Dobrushin \cite{DbrN}. Also RP property has been used to prove phase transition in planar rotor models with staggered external field (van Enter and Ruszel \cite{E&R}). In addition, Frohlich et al. \cite{Fro1} claimed that RP methods would produce better bounds of the critical temperature than Pirogov-Sinai approach can propose.

The paper is organized as follows: In sect. \ref{sec:general} we define the model we study, and present our main result (Theorem \ref{phase transition}). Sect. \ref{s3} contains a brief description of main ideas of the proof. The \textit{reflection positivity} technique, which is the main tool we use for the proofs, will be discussed in sect. \ref{sec:RP}, where we will also describe the \textit{chessboard estimates}. The proof of the main result using the RP technique follows the standard scheme, (see for example \cite{Bisk1}, Chapters 5, 6) and are given in sect. \ref{sec:proof}. The chessboard estimates as constructed in sect. \ref{sec:RP} does not encompass the external field in \cite{NOZ}, therefore in sect. \ref{sec:nardi} we study using RP a generalization of that model, again in the region with Peierls condition.


\section{Definitions and results} 
\label{sec:general}

We {study} the ferromagnetic Ising model on $\mathbb{Z}^2$ with a periodic external field introduced in {\cite{Sig} (see also \cite{MPS})}.  Represent the lattice $\mathbb{Z}^2$ as the union of rectangular cells of the size $L_1\times L_2$, $L_i\in \mathbb N$: for each pair of integers $n, m$ we define
\begin{equation}
\begin{array}{ll}
\label{cellplus}
C(n,m) =  \bigl\{ (t_1,t_2) \in \Z^2 : & nL_1 \le t_1 < (n+1)L_1, \\[0.3cm]
& mL_2 \le t_2 < (m+1)L_2 \bigr\}.
\end{array}
\end{equation}
That is $\mathbb Z^ 2 = \cup_{n,m\in\mathbb Z} C(n,m)$. Let us define  subsets $\mathbf{Z}_+$ and $\mathbf{Z}_-$ of $\mathbb{Z}^2$:
\begin{equation}
\begin{array}{llr}
\label{Zplus}
\mathbf{Z}_+ = \displaystyle\bigcup_{\substack{n,m:\\ n+m \text{ is even}}} C(n,m), & & \hfill \mathbf{Z}_- = \Z^2 \setminus \mathbf{Z}_+.
\end{array}
\end{equation}
A site of $\Z^2$ is colored white if it is from $\mathbf{Z}_+$ and  black otherwise. Thus, the whole lattice is like a chessboard (see Figure~\ref{block}, where $L_1=3$ and $L_2=2$). 

Further we use a term {\it cell-board} since the term chessboard is used by reflection positivity technics which we apply.

Let $\Omega = \{-1,+1\}^{\Z^2}$ be the set of all configurations on $\Z^2$. The formal Hamiltonian is defined by 
\begin{equation}
\label{ham.chess}
H(\sigma)= -J \displaystyle\sum_{\langle t, s \rangle} \sigma(t) \sigma(s) - \displaystyle\sum_s h(s) \sigma(s),
\end{equation}
for any $\sigma \in \Omega$,
where $\sigma(t)\in\{-1,+1\}$ is a {\it spin value} of configuration $\sigma$ at the site $t\in \mathbb Z^2$, $J > 0$ is an interaction constant, a symbol $\langle t, s \rangle$ denotes unordered pairs of nearest neighbors $s, t \in \mathbb{Z}^2$, that is the Euclidean distance between the sites is one, $|t-s| = 1$, and an external field $h$ is given by 
\begin{equation}
\label{external}
h(s)=
\left\{
\begin{array}{rl}
h,&\text{ if }s\in \mathbf{Z}_+,\\
-h,&\text{ if }s\in \mathbf{Z}_-.
\end{array}
\right.
\end{equation}

Further, for any subset $\Lambda \subset \mathbb Z^2$ and any configuration $\sigma\in \Omega$, we will use the notation $\sigma(\Lambda)$ for the configuration of $\sigma$ restricted to the set of sites $\Lambda$.

We recall the standard definitions of a Gibbs field on the infinite lattice $\mathbb{Z}^2$ and related notations. Let $W$ be a finite subset from $\mathbb Z^2$, and let $\Omega_W$ be the set of all configurations on $W$: $\Omega_W=\{-1, 1\}^W$. The Gibbs probability of the configuration $\sigma \in \Omega_W$  with boundary conditions $\omega\in \Omega$, is given by

{\small \begin{equation}
\label{Gibbs1}
\mu_{\beta,W} (\sigma | \omega) = \frac{1}{Z_W(\beta)} \exp \Bigl( \beta J \displaystyle\sum_{\substack{\langle t, s \rangle:\\ t, s \in W}} \sigma(t) \sigma(s) +  \beta J \displaystyle\sum_{\substack{\langle t, s \rangle:\\ t \in W, s \notin W}} \sigma(t) \omega(s) + \beta \displaystyle\sum_{s \in W} h(s) \sigma(s) \Bigr),
\end{equation}}
where $\beta$ is a positive constant usually interpreted as the inverse temperature, and $Z_W(\beta)$ is a normalizing constant, called a \textit{partition function}. 

Let $\mathcal G_\beta$ be a set of all Gibbs states on $\Omega$ obtained by the thermodynamic limit.

A configuration $\tilde{\sigma}\in \Omega$ is a local perturbation of a configuration $\sigma\in \Omega$ if there exists a finite set $V\subset \mathbb{Z}^2$ such that

\begin{equation}
\label{perturbation}
\tilde{\sigma}(t)=
\left\{
\begin{array}{rl}
-\sigma(t),&\text{ if }t\in V,\\
\sigma(t),&\text{ if }t\notin V.
\end{array}
\right.
\end{equation}
A configuration $\sigma\in\Omega$ is called a \textit{ground state} for the Hamiltonian $H$, if for any local perturbation $\tilde{\sigma}$ of the configuration $\sigma$ the inequality
\begin{equation*}
H(\tilde{\sigma}) - H(\sigma) \ge 0,
\end{equation*}
is valid. Following \cite{Si} we say that the {\it Peierls condition} holds true, if there exists a positive constant $c_P > 0$ such that for any local perturbation $\tilde{\sigma}$ (as in \eqref{perturbation}) of a ground state $\sigma$ the inequality
\begin{equation}
\label{peierls1}
H(\tilde{\sigma}) - H(\sigma) \ge c_P |\partial V |,
\end{equation}
holds, where $\partial V=\{\langle t,s\rangle:\:t\in V,s\notin V\}$ is the boundary of the set $V$. The constant $c_P$ is called the \textit{Peierls constant}.

The following theorem provides the known results from  \cite{MPS} about the ground states and the Peierls condition for our model.
\medskip

\begin{theorem}
\label{pecherski}
If
\begin{equation}\label{Peierls2}
 h < \frac{2J}{L_1} + \frac{2J}{L_2},
 \end{equation}
 then there exist two periodical ground states, namely the constant configurations $\sigma^+ \equiv +1$ and $\sigma^- \equiv -1$. In addition, the Peierls condition holds, and the Peierls constant $c_P$ is equal to $2J - hL_1L_2/(L_1+L_2)$. If \reff{Peierls2} does not hold and
 \begin{equation}
 h > \frac{2J}{L_1} + \frac{2J}{L_2},
 \end{equation}
 then the configuration
 \begin{equation}
 \label{sigmacell}
\sigma_c(t)=
\left\{
\begin{array}{rl}
+1,&\text{ if }t\in \mathbf{Z}_+,\\
-1,&\text{ if }t\in \mathbf{Z}_-,
\end{array}
\right.
\end{equation}
is the unique periodic ground state.
\end{theorem}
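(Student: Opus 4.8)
\medskip
\noindent\textbf{Proof strategy.} The plan is to reduce everything to the exact energy cost of a local perturbation and then to two isoperimetric-type inequalities proved by \emph{slicing} the perturbed set into rows and columns; the slicing is what makes the constants sharp. For the constant states, flipping $\sigma^+\equiv+1$ on a finite set $V$ gives at once $H(\tilde\sigma)-H(\sigma^+)=2J\,|\partial V|+2h(|V\cap\mathbf{Z}_+|-|V\cap\mathbf{Z}_-|)$, and the same identity with the last sign reversed for $\sigma^-$. Writing $V_\pm=V\cap\mathbf{Z}_\pm$ and recalling $c_P=2J-hL_1L_2/(L_1+L_2)$, the Peierls bound \eqref{peierls1} for $\sigma^+$ is equivalent to
\[
|V_-|-|V_+|\ \le\ \frac{L_1L_2}{2(L_1+L_2)}\,|\partial V| ,
\]
and the one for $\sigma^-$ to the same inequality with the two colours of the cell-board interchanged --- legitimate, since the partition is invariant under a shift by one cell.

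To prove this inequality I would write $|\partial V|=|\partial^{h}V|+|\partial^{v}V|$, the numbers of horizontal and of vertical boundary edges. On a fixed horizontal line the trace of $V$ is a finite union of intervals sitting on a one-dimensional cell-board with cells of length $L_1$; for a single interval $I$ of that trace the excess of its black over its white sites is at most $L_1$, while $I$ contributes exactly two horizontal boundary edges on that line, so summing over intervals and then over all horizontal lines yields $|V_-|-|V_+|\le\tfrac{L_1}{2}|\partial^{h}V|$. Vertical slicing gives $|V_-|-|V_+|\le\tfrac{L_2}{2}|\partial^{v}V|$, and dividing these by $2/L_1$ and $2/L_2$ and adding produces exactly the displayed bound, with equality precisely when $V$ is a single cell --- which is why $c_P$ cannot be improved. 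As $h<2J/L_1+2J/L_2$ forces $c_P>0$, this settles the first half.

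For $h>2J/L_1+2J/L_2$ I would flip $\sigma_c$ on a finite nonempty $V$. Since in $\sigma_c$ every spin is aligned with its field, $\sigma_c$ is constant on each cell, and adjacent cells carry opposite signs, one gets $H(\tilde\sigma)-H(\sigma_c)=2J(E_{\mathrm{in}}(V)-E_{\mathrm{ac}}(V))+2h\,|V|$, where $E_{\mathrm{in}}(V)$, $E_{\mathrm{ac}}(V)$ count the edges of $\partial V$ joining two sites of one cell, respectively of two different cells. Splitting $V$ into its traces $A_C=V\cap C$ on the cells, $E_{\mathrm{in}}(V)=\sum_C e_C$ with $e_C$ the edge boundary of $A_C$ inside the $L_1\times L_2$ grid $C$, and $E_{\mathrm{ac}}(V)\le\sum_C\widetilde g_C$ with $\widetilde g_C$ the number of edges leaving $C$ from a site of $A_C$. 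Hence it suffices to prove the geometric bound
\[
\widetilde g_C(A)-e_C(A)\ \le\ \Bigl(\tfrac{2}{L_1}+\tfrac{2}{L_2}\Bigr)|A|\qquad\text{for every }A\subseteq C ,
\]
again by slicing $A$ into rows and columns: in each row, the number of horizontal edges of $A$ leaving $C$ minus the number of horizontal boundary edges of $A$ inside $C$ is at most $2/L_1$ times the size of $A$ in that row (a short run-by-run check), and symmetrically for columns. Substituting gives $H(\tilde\sigma)-H(\sigma_c)\ge 2|V|(h-2J/L_1-2J/L_2)>0$, so $\sigma_c$ is a ground state and every nontrivial local perturbation strictly increases the energy.

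It remains to show $\sigma_c$ is the only periodic ground state. If $\sigma$ is periodic and a ground state, then replacing $\sigma$ inside a large box $\Lambda_L$ by an arbitrary periodic configuration is a finite perturbation; letting $L\to\infty$ (bulk terms of order $|\Lambda_L|$, boundary corrections of order $|\partial\Lambda_L|$) shows that $\sigma$ has minimal energy density, hence the same density as $\sigma_c$. On the other hand, running the per-cell bookkeeping above over one period of $D=\{t:\sigma(t)\neq\sigma_c(t)\}$ instead of over a finite $V$ gives a strictly positive lower bound for the energy density of $\sigma$ minus that of $\sigma_c$ unless $D=\varnothing$; hence $\sigma=\sigma_c$. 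I expect the two slicing inequalities to be the only non-routine step --- the trick is to peel off $\partial V$ (and the cell boundaries) one direction at a time so that the single-cell extremal configuration attains equality; the energy-density comparison is standard.
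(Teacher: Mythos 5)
Your argument for the first half coincides with what the paper actually proves: the identity $H(\tilde\sigma)-H(\sigma^{+})=2J|\partial V|+2h\bigl(|V_+|-|V_-|\bigr)$, the row/column slicing of $V$, the bound $L_1$ (resp.\ $L_2$) on the field imbalance of a single run, the observation that each open run contributes two edges to $\partial^h V$ (resp.\ $\partial^v V$), and the convex combination with weights $L_1/(L_1+L_2)$ and $L_2/(L_1+L_2)$ are exactly the proof of Lemma~\ref{lem.hb} (itself borrowed from Lemma~3.3 of \cite{MPS}), transplanted from the torus $\mathbb{T}^{[2\times 2]}$ back to $\Z^2$, where every run is open; your remark that a single cell attains equality correctly identifies $c_P$ as optimal. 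The second half of the theorem is only cited in the paper, so your per-cell decomposition $E_{\mathrm{ac}}-E_{\mathrm{in}}\le\sum_C(\widetilde g_C-e_C)$ together with the inequality $\widetilde g_C(A)-e_C(A)\le(2/L_1+2/L_2)|A|$ is genuinely additional content. I verified the ``run-by-run check'' you leave implicit: in a row of $C$ with $k$ maximal runs of $A$ and $a\in\{0,1,2\}$ row-ends occupied by $A$, the horizontal contribution is $2a-2k$, which is $\le (2/L_1)|A_r|$ with equality only when the row is full; summing rows and columns gives the claim, and hence $H(\tilde\sigma)-H(\sigma_c)\ge 2|V|\bigl(h-2J/L_1-2J/L_2\bigr)>0$. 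The uniqueness step is the only place where you rely on an unproved (though standard) fact: that a periodic ground state must minimize the specific energy among periodic configurations via the large-box replacement argument, after which the same per-cell bookkeeping on one period of the disagreement set forces $D=\varnothing$. Spelling out that replacement argument (bulk gain of order volume versus boundary correction of order perimeter) would make the proof complete; no step as written would fail.
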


\subsection{Main result. Phase transition for cell-board model}
 
The next theorem provides the presence of a first-order phase transition in the cell-board model.

\begin{theorem}
\label{phase transition}
Let the condition \reff{Peierls2} hold true, then there exists some $\beta_0 = \beta_0 (L_1,L_2)$, such that for any $\beta > \beta_0$, there exist two distinct measures $\mu^+_{\beta}$ and $\mu^-_{\beta} \in \mathcal{G}_{\beta}$, which satisfy

\begin{equation}
\label{theorem1}
\mu^{\pm}_{\beta} (\sigma(t) = \pm 1 ) > \frac{1}{2}.
\end{equation}
That means $|\mathcal{G}_{\beta}|>1$. Moreover
\begin{equation}\label{criticalB}
\beta_0 = \frac{8[(B_1B_2+4)\ln2+\ln(c(c+1))]}{2J-\frac{hL_1L_2}{L_1+L_2}},
\end{equation}
where $B_i,\ i=1,2$ are defined in \reff{blockside} and $c > 1$ is a combinatorial constant related to the number of contours of a given size.
\end{theorem}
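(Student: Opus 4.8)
The plan is to prove Theorem~\ref{phase transition} by the standard reflection-positivity route via chessboard estimates, following the scheme described in \cite{Bisk1}, Chapters 5--6. First I would set up the model on a large torus $\TT_N$ whose side lengths are multiples of $2L_1$ and $2L_2$ (so that the cell-board external field is compatible with the periodic boundary conditions and the resulting finite-volume Gibbs measure is reflection positive with respect to reflections through planes between coordinate hyperplanes, after suitably pairing the reflections so they act consistently on the $h$-field). The next step is to identify the ``bad events'': for a site $t$, let $\B_t$ be the event that the plaquette (or small block) at $t$ is not in one of the two ground-state patterns $\sigma^+\equiv+1$, $\sigma^-\equiv-1$ --- i.e. that it contains a bond or a region witnessing a contour. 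By the chessboard estimate, the probability of seeing bad events simultaneously at a set of sites factorizes into a product of single-site (or single-block) quantities, each bounded by the ``universal'' constant $p_N(\B):=\mu_N(\text{all translates of }\B_t\text{ occur})^{1/|\TT_N|}$.

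The heart of the argument is to bound $p_N(\B)$ by something that goes to $0$ as $\beta\to\infty$, with the rate governed by the Peierls constant $c_P = 2J - hL_1L_2/(L_1+L_2)$ from Theorem~\ref{pecherski}. Concretely, I would tile the torus by the disjointly-reflected copies of the bad block and estimate the constrained partition function: forcing a bad pattern periodically everywhere costs, per site, at least an energy increment $c_P$ relative to the ordered configuration (this is exactly where \reff{Peierls2} is used, via Theorem~\ref{pecherski}), while the entropy of the block pattern contributes at most a factor $2^{B_1 B_2 + O(1)}$ per block. This yields a bound of the form $p_N(\B) \le C\, 2^{B_1B_2+4}\, e^{-\beta c_P}$ (the constant $4$ and the factor $c(c+1)$ absorbing boundary-bond corrections and the combinatorial count of contour shapes), which is $<$ the threshold needed for the Peierls-type contour bound precisely when $\beta > \beta_0$ with $\beta_0$ as in \reff{criticalB}. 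I would then run the contour argument: in the $+$-boundary-condition (or, on the torus, conditioned-on-the-central-block-being-$+$) measure, a site $t$ has $\sigma(t)=-1$ only if it is encircled by a contour of bad blocks; summing the chessboard bound over all contours surrounding $t$, a standard Peierls counting argument (each contour of length $\ell$ contributes $\le (c\, p_N(\B))^{\ell}$, with $c$ the growth constant for the number of contours) gives $\mu_N(\sigma(t)=-1) < 1/2$ uniformly in $N$ once $\beta>\beta_0$. Taking $N\to\infty$ along the subsequence of admissible tori produces $\mu_\beta^+\in\G_\beta$ with $\mu_\beta^+(\sigma(t)=+1)>1/2$, and by the $\sigma\mapsto-\sigma$ symmetry of the Hamiltonian \reff{ham.chess} (which flips $h\to-h$ but that is a symmetry of the cell-board field since it just swaps $\mathbf Z_+$ and $\mathbf Z_-$, a translation), $\mu_\beta^-$ is obtained symmetrically; the two are distinct, so $|\G_\beta|>1$.

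The main obstacle I expect is verifying reflection positivity in the presence of the non-constant external field: one must choose the reflection planes so that each reflection maps the $h$-field to itself (possible because the field is periodic with period $2L_i$ in direction $i$, so reflections through the centers or the boundaries of cells, appropriately chosen, preserve it), and check that the RP property survives the addition of the one-body term $\sum_s h(s)\sigma(s)$ --- this is fine since one-body terms split as sums over the two half-spaces and thus never obstruct RP, but it has to be stated carefully. The second delicate point is the bookkeeping that turns the block-energy/block-entropy estimate into the \emph{explicit} constant \reff{criticalB}: pinning down why the entropy exponent is exactly $B_1B_2+4$ and where the multiplicative $c(c+1)$ and the overall factor $8$ in the numerator come from requires tracking the size $B_1\times B_2$ of the reflection block (defined in \reff{blockside}), the four extra boundary bonds of a block, and the conversion between ``density of bad blocks $< $ threshold'' and ``$\beta > \beta_0$'' in the contour sum; none of this is conceptually hard but it is where a clean writeup must be careful. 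All the heavy analytic lifting (ground states, Peierls constant) is already supplied by Theorem~\ref{pecherski}, so the remaining work is genuinely the RP/chessboard packaging plus the contour estimate.
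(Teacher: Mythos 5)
Your overall route is the same as the paper's: torus setup, verification of reflection positivity for the cell-board field, bad-block events, the chessboard estimate combined with the Peierls constant $2J-hL_1L_2/(L_1+L_2)$ to bound the propagated bad-block probability, a Peierls-type sum over thick contours of bad blocks, and extraction of $\mu^{\pm}_{\beta}$ in the thermodynamic limit. There is, however, one concrete gap in your contour step. When $L_1$ (or $L_2$) is even, the reflection lines \reff{planes} bisect bonds, so adjacent $\Lambda$-blocks are disjoint; a Peierls contour separating $s$ from $t$ can then run entirely along the interface between two blocks on each of which the configuration is constant (one all $+1$, the other all $-1$). In that situation no single $\Lambda$-block is bad, and your key implication ``$\sigma(t)=-1$ only if $t$ is encircled by a contour of bad blocks'' fails. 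The paper repairs this by introducing the double-blocks $\Lambda^*_{k}$, $k\in\mathcal{D}$, proving a separate chessboard estimate \reff{chessboard1_1} over the coarser sublattices \reff{alltori}, and assembling the thick contour out of both $\Lambda$- and $\Lambda^*$-blocks. This device is also the source of the specific numerology in \reff{criticalB}: the Cauchy--Schwarz splitting over the four double-block families produces the exponent $\beta/8$ (rather than $\beta/2$) in Proposition \ref{twopoints}, and the factor $c(c+1)$ arises as $c+c^2$ because each double-block is counted twice in the contour enumeration. Without this, your contour sum does not close in the even case and the constant you would obtain cannot match \reff{criticalB}.

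A second, minor, divergence: for the identity $\mu_{\beta,N}(\sigma(\Lambda_s)\equiv+1)=\mu_{\beta,N}(\sigma(\Lambda_s)\equiv-1)$ you invoke the global spin flip composed with a translation by $(L_1,0)$ that swaps $\mathbf{Z}_+$ and $\mathbf{Z}_-$, whereas the paper composes the spin flip with the reflection $\theta_{Q_1}$ of \reff{4.12}. Both preserve the torus Hamiltonian when $N$ is even, so this part of your argument is sound.
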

\medskip

\begin{remark}:
\begin{itemize}
\item Estimates for $c$ can be found in \cite{L&M} and \cite{B&B}. In our case $c$ can be taken no greater than 9.
\item The denominator in \eqref{criticalB} is the Peierls constant defined in Theorem \ref{pecherski}.
\item Let $\beta_c$ the inverse critical  temperature, then $\beta_0 \geq \beta_c$.
\end{itemize}
\end{remark}

Theorem \ref{phase transition} is the main result in the paper. The proof is in sect.~\ref{sec:proof}. 
It is based on the reflection positivity machinery. We explain the reflection positivity (RP) technique in a way adapted to our model in the section \ref{sec:tool}.

We conclude this section with a well known fact about connection between a particular case of our model and the antiferromagnetic Ising model with a constant external field. The formal Hamiltonian for the antiferromagnetic model is
\begin{equation}\label{antif}
H_a(\sigma)= -J_a \displaystyle\sum_{\langle t, s \rangle} \sigma(t) \sigma(s) - \displaystyle h_a \sum_s \sigma(s),
\end{equation}
where the interaction constant $J_a$ is negative, $J_a<0$, that creates the antiferromagnetic interactions between the nearest spins, the external field $h_a$ is a real constant. The external field, $\pm h$, of the cell-board model, when $L_1=L_2=1$, should be equivalent to the antiferromagnetic model with the constant external field $h_a=h$. This fact has been discussed by Frohlich et al. \cite{Fro2} in the context of RP applications (see also \cite{Dbr1}). In our settings this result is the consequence of Theorem \ref{phase transition}.

\begin{corollary}\label{coro1} Let $J, h>0$.
If $h<4J$ and $\beta > \frac{2k}{4J-h}$ for some $k>0$, then the antiferromagnetic Ising model (\ref{antif}) with $J_a=-J$ and $h_a=h$ has two phases.
\end{corollary}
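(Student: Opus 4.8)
The plan is to reduce the statement to Theorem \ref{phase transition} with $L_1=L_2=1$ by means of the classical sublattice gauge transformation available on bipartite lattices. Since $\mathbb Z^2$ is bipartite, and for $L_1=L_2=1$ the sets $\mathbf Z_+$ and $\mathbf Z_-$ of \eqref{Zplus} are precisely its two sublattices, I would define the involution $T\colon\Omega\to\Omega$ by $(T\sigma)(t)=\sigma(t)$ for $t\in\mathbf Z_+$ and $(T\sigma)(t)=-\sigma(t)$ for $t\in\mathbf Z_-$. Every nearest-neighbour bond $\langle t,s\rangle$ joins a site of $\mathbf Z_+$ to a site of $\mathbf Z_-$, so exactly one of the two spins is reversed and $\sigma(t)\sigma(s)=-(T\sigma)(t)(T\sigma)(s)$; the single-site term transforms as $h_a\sigma(s)=h(s)\,(T\sigma)(s)$ with $h(s)$ as in \eqref{external} and $h=h_a$. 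Taking $J=-J_a>0$, a direct computation then gives $H_a(\sigma)=H(T\sigma)$, where $H$ is the cell-board Hamiltonian \eqref{ham.chess} for $L_1=L_2=1$ with interaction constant $J$, and the same identity holds for the finite-volume Hamiltonians once the flip of the boundary condition is incorporated into $T$.

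Next I would record the measure-theoretic consequence. The map $T$ is a homeomorphism of $\Omega$ preserving the natural $\sigma$-algebras, and for every finite $W\subset\mathbb Z^2$ and every boundary condition $\omega$ the finite-volume Gibbs kernels satisfy $\mu^{a}_{\beta,W}(\,\cdot\mid\omega)=\mu_{\beta,W}\bigl(T(\,\cdot\,)\mid T\omega\bigr)$, where $\mu^{a}$ denotes the Gibbs kernel associated with \eqref{antif} with $J_a=-J$, $h_a=h$. Passing to thermodynamic limits, push-forward by $T$ is therefore a bijection between the set of Gibbs states of the antiferromagnetic model \eqref{antif} and the set $\mathcal G_\beta$ of the cell-board model with $L_1=L_2=1$; in particular these two sets have the same cardinality.

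It then remains to invoke Theorem \ref{phase transition}. For $L_1=L_2=1$ the condition \eqref{Peierls2} reads $h<4J$, which holds by hypothesis, and the Peierls constant from Theorem \ref{pecherski} equals $2J-\tfrac{h}{2}=\tfrac{4J-h}{2}>0$. Hence \eqref{criticalB} yields
$\beta_0=\dfrac{16\big[(B_1B_2+4)\ln 2+\ln(c(c+1))\big]}{4J-h}=\dfrac{2k}{4J-h}$
with $k=8\big[(B_1B_2+4)\ln 2+\ln(c(c+1))\big]>0$ a fixed constant, the values $B_1,B_2$ being those from \eqref{blockside} evaluated at $L_1=L_2=1$. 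For any $\beta>\beta_0$ Theorem \ref{phase transition} gives $|\mathcal G_\beta|>1$, so by the bijection just described the antiferromagnetic model \eqref{antif} admits at least two distinct Gibbs states, i.e.\ two phases.

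I do not anticipate a genuine obstacle. The only point demanding some care is the bookkeeping that $T$ intertwines the finite-volume specifications \emph{including} their boundary terms, so that it induces a bijection of the full sets of (DLR) Gibbs states rather than merely transporting one explicit pair of extremal states; once this is stated cleanly, the corollary follows at once from Theorem \ref{phase transition}.
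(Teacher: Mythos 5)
Your proof is correct and follows essentially the same route as the paper: the sublattice spin-flip $T$ you define is exactly the paper's transformation $\Psi$, and the reduction to Theorem \ref{phase transition} with $L_1=L_2=1$ (giving $h<4J$ and $\beta_0=2k/(4J-h)$ with $k=8[(B_1B_2+4)\ln 2+\ln(c(c+1))]$) is the intended argument. Your additional care about $T$ intertwining the finite-volume specifications with boundary conditions is a welcome elaboration of what the paper leaves implicit, but it is not a different method.
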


\begin{proof}
Consider the cell-board model with $L_1=L_2=1$, as defined in \eqref{ham.chess} and \eqref{external}. Now, define the transformation $\Psi$ of the configuration space $\Omega$, 
$\Psi:\:\Omega\to\Omega$,
\begin{equation*}
\Psi(\sigma)(t_1,t_2)=\begin{cases}\sigma(t_1,t_2),&\mbox{ if }t_1+t_2 \mbox{ even},\\
-\sigma(t_1,t_2), &\mbox{ otherwise,}
\end{cases}
\end{equation*} 
is an one-to-one transformation of  $\Omega$. Note that if in (\ref{antif}) we choose $J_a=-J$, where $J>0$, and $h_a = h$, then the transformation $\Psi$ does not change energy of the configurations and provides the direct equivalence of the models.
\end{proof}


\section{Plan of the proof of Theorem \ref{phase transition}}\label{s3}  Our model has a set of reflection symmetries that allow us to apply the reflection positivity technique (see subsection \ref{sec:RP}). The proof of the RP property is Proposition \ref{ourisRP}. The reflections are with respect to lines parallel to the coordinate axes. Depending on the parity of sides $L_1$ and $L_2$, the reflecting lines can either go through the sites of $\Z^2$ or bisect edges of   $\Z^2$.  In our model, not all such lines are reflecting. As a result there are blocks of the sites in $\Z^2$ which do not have the reflection property, those blocks entirely reflected with respect to the reflecting lines. Definitions of the blocks see in \reff{lamb} and \reff{lambti}. 

We take a torus as a main scene of our considerations. The thermodynamical limit is corresponding to the growth of the torus size. 

Proving the main Theorem \ref{phase transition} we estimate the probability to have different spin values +1 and -1 at remote sites on the torus (see Proposition \ref{twopoints}). The goal is to show that this probability is small. It is clear that the event
$$
(\sigma(s)=+1,\sigma(t)=-1),
$$ 
when $s\ne t$, should generate Peierls contour which is a set of the edges having the different values on the edge ends. We use the contour arguments for the proof, however we have to use \textit{thick contours} (\textit{block contours}) consisted of the blocks.  The block contour is composed of the blocks in which  Peierls contour is passing. Any configuration on each such block takes the different spin values (a  \textit{bad block}). There is an exclusion which should be treated separately (see about \textit{double-blocks} in section \ref{sec:proof}). A small probability of the configurations on the bad block follows from the chessboard estimate (Theorem \ref{chess-board}) and from the Peierls condition (Lemma \ref{lem.hb}, see also Proposition \ref{prop1}). The chessboard estimate is applied to find an upper bound of the bad block probability, see \reff{4.39}. The Peierls condition is applied to make this upper bound small at small temperature.


\section{A detailed plan of the proof. Constructions}
\label{sec:tool}
Together with the lattice $\Z^2$ we often  consider a graph
\begin{equation}\label{graphZ2}
\mathbb G=(\Z^2,\mathbb E),
\end{equation}
where $\mathbb E$ is a set of edges between the neighbouring sites. Along with the discrete spaces and sets we consider ``continuous" spaces (manifolds) as $\R^2$ and tori.

Now, we place the spin system on a two-dimensional torus. Let
\begin{equation}\label{torocont}
\wh{\TT}_N=\R^2/[(NL_1 \mathbb Z) \times(NL_2 \mathbb Z)],
\end{equation}
be a toric manifold. A map $\mathcal M_N:\R^2\to \wh{\TT}_N$, is such that for every rectangle
$$
\wh{\Delta}_{N,n}=\{\rb=(r_1,r_2):\:L_inN\leq r_i<L_i(n+1)N,\ i=1,2\},
$$
the restricted map $\mathcal M_N:\wh{\Delta}_{N,n}\to \wh{\TT}_N$, is a bijection. Let $\TT_N$ be an image of $\Z^2$ by $\mathcal M_N$
$$
\TT_N=\wh{\TT}_N\cap\mathcal M_N(\Z^2).
$$

The coordinate system of $\wh{\TT}_N$ is naturally induced by $\mathcal M_N$ from $\R^2$. An ambiguity because of multivaluedness of the map  $\mathcal M_N$ will not lead to confusions hereinafter.

We assume that $N$ is even. Then
$\Delta_N=\wh\Delta_{N,0}(0)\cap\Z^2$ is composed by $N^2/2$ cells of $\mathbf{Z}_+$ and the same amount of cells of type $\mathbf{Z}_-$. 

Let $\Omega_N = \{-1,+1\}^{\mathbb T_N}$ be the set of all configurations on the torus $\mathbb T_N$. We consider Hamiltonian $H_N$ with so called periodical boundary conditions: for any $\sigma \in \Omega_N$

\begin{equation}
\label{torusham}
H_N(\sigma)= -J \displaystyle\sum_{\langle t, s \rangle \in \mathbb{T}_N} \sigma(t) \sigma(s) - \displaystyle\sum_{s \in \mathbb{T}_N} h(s) \sigma(s).
\end{equation}
The Gibbs measure is
\begin{equation}
\label{Gibbs2}
\mu_{\beta,N} (\sigma) = \frac{1}{Z_N(\beta)} \exp \Bigl( \beta J \displaystyle\sum_{\substack{\langle t, s \rangle \in \mathbb{T}_N}} \sigma(t) \sigma(s)  + \beta \displaystyle\sum_{s \in \mathbb{T}_N} h(s) \sigma(s) \Bigr),
\end{equation}
where $Z_N(\beta)$ is the corresponding partition function:
\begin{equation}
\label{partfun}
Z_N(\beta) =  \displaystyle\sum_{\sigma\in\Omega_N} \exp \left( - \beta H_N(\sigma) \right).
\end{equation}


\subsection{Reflection Positivity and chessboard estimate}
\label{sec:RP}

In this section we define the Reflection Positivity (RP) technique that we use. The main consequence of RP is the \textit{chessboard estimate}, which is used to prove phase coexistence in the models with RP property. This technique was developed in the works of Frohlich et al. \cite{Fro1,Fro2,Fro3,GJS}. Surveys about this method can be found in Georgii \cite{book2} and Shlosman \cite{Shl}.

We include some detailed explanations of the RP method, because in our case there exists the dependence of chessboard estimates on the size of the cells of the external field \eqref{external}. We will mainly use the notation and definitions of Biskup and Koteck\'y \cite{Bisk2} and Biskup \cite{Bisk1}.

\subsubsection{Reflection positivity.}

We define reflection symmetries with respect to lines orthogonal to one of the lattice directions. Assuming the lattice $\Z^2$ embedded in $\mathbb{R}^2$, we denote by $\Theta$ the group of all transformations of $\mathbb{R}^2$ generated by reflections of $\mathbb{R}^2$ with respect to lines orthogonal to one of the lattice directions  such that $\mathbb Z^ 2$ is invariant for any $\vartheta\in\Theta$: $\vartheta\Z^2 = \Z^2$. Let $\vartheta_P$ denote the reflection $\vartheta$ with respect to the line $P$. The group $\Theta$ is composed by  two distinct subgroups $\Theta^k$ $(k=0, 1/2)$, generated by reflections $\vartheta_{P_i^{(n,k)}}$ for which the corresponding lines are  
\begin{equation}\label{4.4}
P_i^{(n,k)}=\{ t=(t_1,t_2) \in \mathbb{R}^2: t_i = n + k \},
\end{equation}
for $i=1$ or $2$,  integer $n$ and $k=0$ or $1/2$. Reflections from $\Theta^ 0$ we will call the reflections \textit{through sites}: the corresponding reflection lines pass through the sites of $\Z^2$. Reflections from the set $\Theta^{1/2}$ we will call reflections \textit{through bonds}: the corresponding reflection lines bisect bonds of $\mathbb E$, \reff{graphZ2}.

The groups $\Theta^k, k=0,1/2,$ naturally generate  the reflections of the tori $\wh{\TT}_N$ and $\mathbb{T}_N$. Thus $\vartheta_P(\wh{\mathbb{T}}_N) = \wh{\mathbb{T}}_N$ and $\vartheta_P(\mathbb{T}_N) = \mathbb{T}_N$. 
The reflecting line $P$ in $\Z^2$ becomes two antipodal lines in the torus which splits the torus into two symmetric components, say $\TT_N^l$ and $\TT_N^r$, the \textit{left} and the \textit{right} halfs. We denote those lines with the same symbol $P$ as well as the reflection $\vartheta_P\in\Theta^k$ between the left and right halfs such that $\vartheta_P(\mathbb{T}^l_N)=\mathbb{T}^r_N$ and  vice versa (see Figure \ref{torus}).
Note that $\mathbb{T}^l_N \cap \mathbb{T}^r_N \in P$ for the reflections through the sites ($k=0$) and are disjoint for the reflections through the bonds ($k=1/2$).

\begin{figure}[ht]
\begin{center}
\epsfig{file=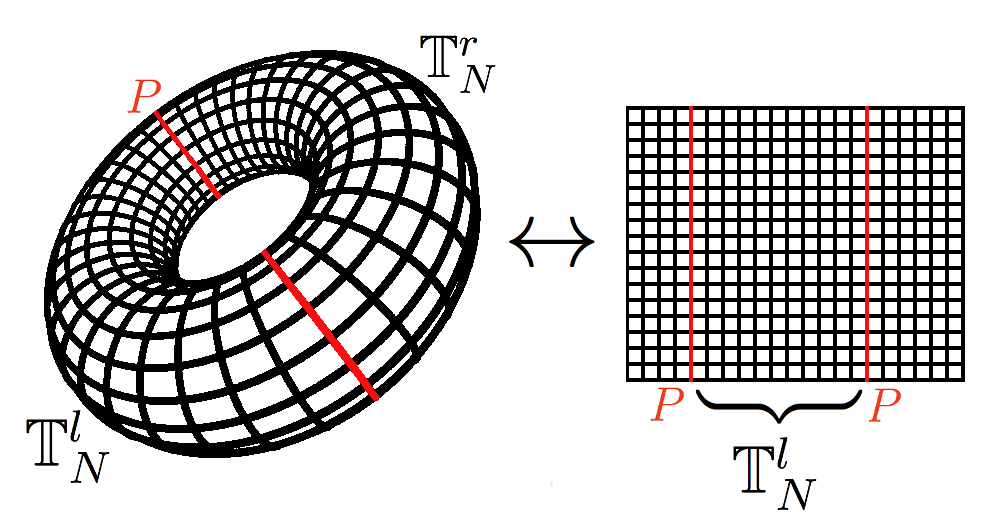, height=4cm, width=8cm}
\end{center}
\caption{A torus $\mathbb{T}_N$ is divided into the corresponding left and right half of the torus by reflecting line $P$ passing through sites, i.e. $\vartheta_P\in \Theta^0$.}\label{torus}
\end{figure}
\vspace{0.1cm}

Let $\mathcal{F}^l_P$ ($\mathcal{F}^r_P$) be a $\sigma$-algebra on $\Omega_N$ generated by all functions $\sigma(t), t\in \mathbb{T}^l_N\ (\mathbb{T}^r_N)$. As in \cite{Bisk2} we introduce a reflection operator $\theta_P :   \Omega_N \to  \Omega_N$: $\theta_P(\sigma(s)) : = \sigma(\vartheta_P(s))$ for any spatial reflection $\vartheta_P: \mathbb{T}^l_N \leftrightarrow \mathbb{T}^r_N$. The operator  $\theta_P$ obeys the following properties:

\begin{itemize}
\item [(1)]  $\theta_P$ is an involution, $\theta_P \circ \theta_P = id$;

\item [(2)] $\theta_P$ is a {\it reflection} in the sense that if $\mathcal{A} \in \mathcal{F}^l_P$ depends only on configurations on $\Lambda \subset \mathbb{T}^l_N$, then $\theta_P(\mathcal{A}) \in \mathcal{F}^r_P$ depends only on configurations on $\vartheta_P(\Lambda)$.
\end{itemize}

\begin{definition} 
\label{def.RP}
{\rm (Reflection Positivity \cite{Fro1,Fro2}, and see Definition 2.2 of \cite{Bisk2})}. Let $\mu$ be a probability measure on $ \Omega_N$, denote $\E_{\mu}$ the corresponding expectation, and let $P$ be a reflecting line. We say that $\mu$ is a \textrm{reflection positive} measure with respect to $\theta_P$ if for any two bounded $\mathcal{F}^l_P$-measurable functions $f$ and $g$  
\begin{equation}
\label{RP1}
\E_{\mu} (f \theta_P ( g )) = \E_{\mu} (g \theta_P (f)),
\end{equation}
and
\begin{equation}
\label{RP2}
\E_{\mu} (f \theta_P (f)) \ge 0,
\end{equation}
where $\theta_P ( f)$ is the $\mathcal{F}^r_P$-measurable function $f\circ \theta_P$.
\end{definition}

A consequence of RP is an inequality like the Cauchy-Schwarz inequality

\begin{equation}
\label{C-S}
[\E_{\mu} (f \theta_P( g)) ]^2 \le \E_{\mu} (f \theta_P( f)) \E_{\mu} (g \theta_P( g)).
\end{equation}

\subsubsection{Chessboard estimates.}

In this section, we recall the chessboard estimate in a form fitted to our case. The symmetries of $\TT_N$ which are used for the applications, are related to the symmetries of the external field. Since the external field is periodical any symmetry transformation should save block periods. The symmetry transformation of $\TT_N$ are reflections of ${\TT}_N$ with respect to lines in  $\wh\TT_N$. Let $\mathcal P= \mathcal P_1 \cup \mathcal P_2$ be a set of those lines being the union of the lines
\begin{equation}
\label{planes}
P_i^{(n)} = \{ t=(t_1,t_2)\in \mathbb R^2: \ t_i = nL_i + (L_i-1)/2\},\ i=1,2,
\end{equation}
where $\mathcal P_1=\{P_1^{(n)}\} $ and $\mathcal P_2=\{P_2^{(n)}\} $.

Note that if $L_i$ is odd then the corresponding reflection $\vartheta_{P_i^{(n)}}\in \Theta^0$, and if $L_i$ is even then the corresponding reflection $\vartheta_{P_i^{(n)}}\in \Theta^{1/2}$. Any such line cuts in half corresponding cells $C(n,m)$ (see \reff{cellplus}). The set of lines $\mathcal P$ provides the decomposition of $\mathbb{T}_N$ into rectangular blocks (see Figure~\ref{block}). In each block the total value of external field is equal to zero.

\begin{figure}[ht]
\begin{center}
\subfigure[]{
\epsfig{file=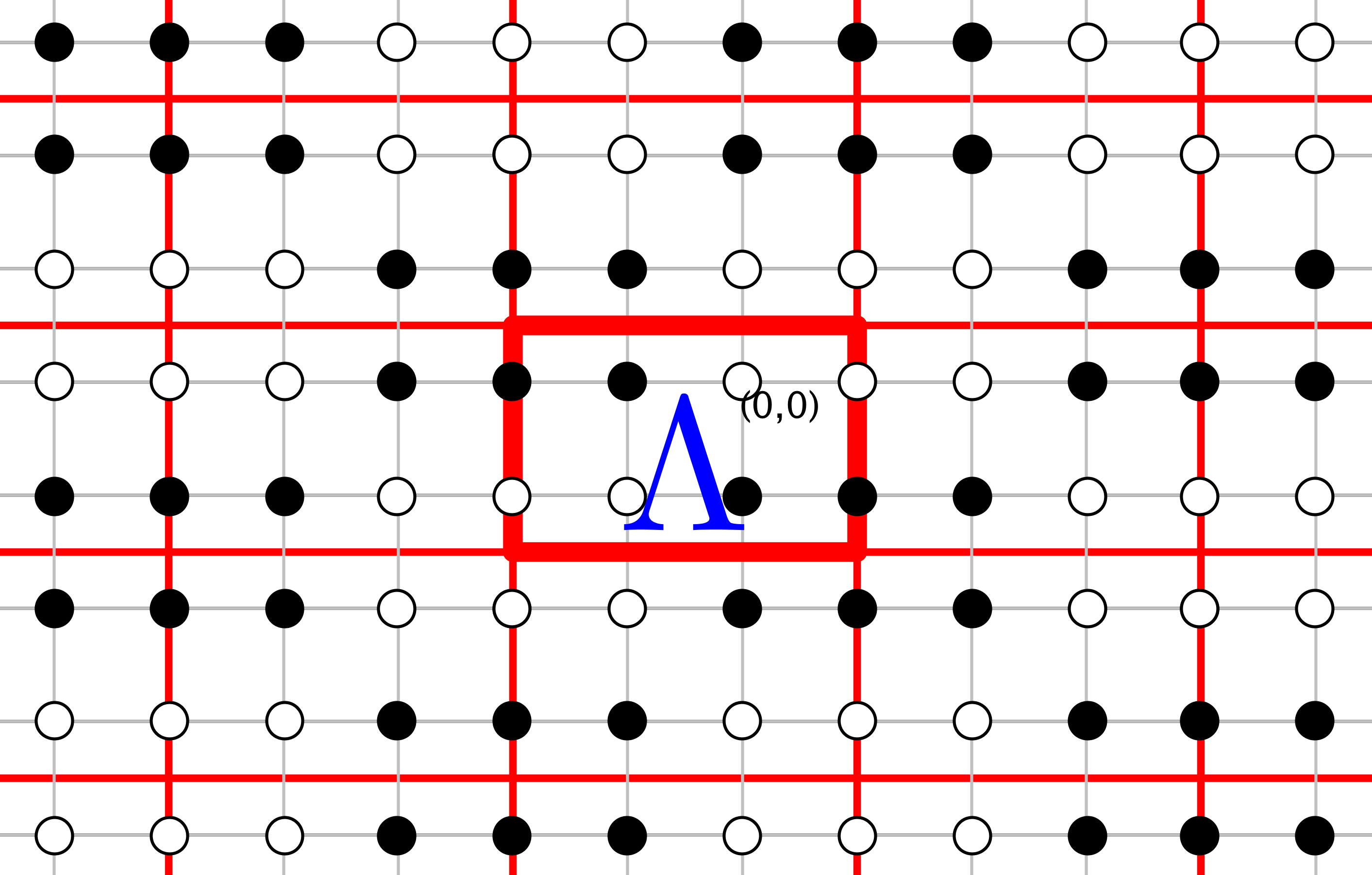, height=4cm}
\label{block}
}
\subfigure[]{
\epsfig{file=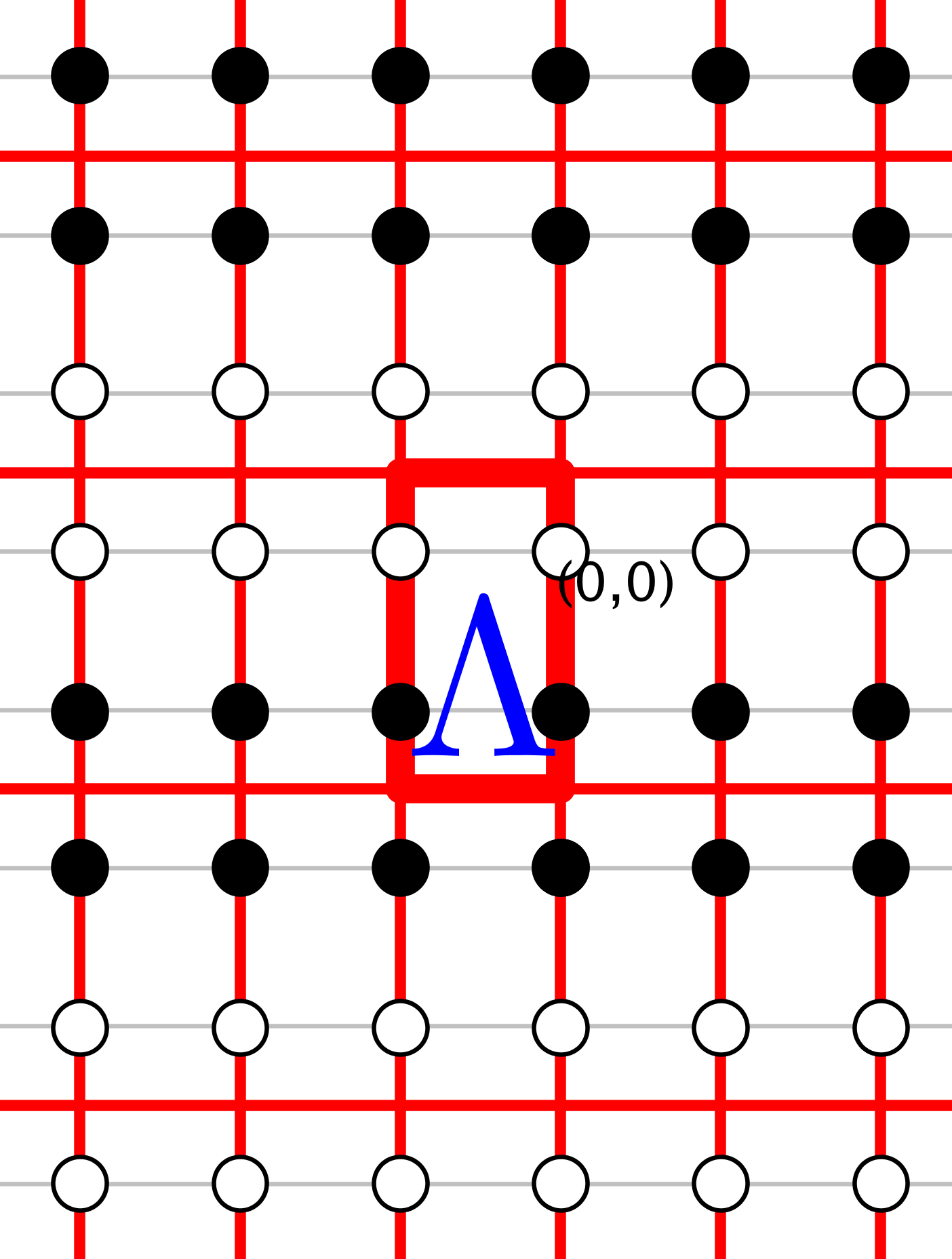, height=4cm}
\label{blocknardi}
}
\end{center}
\caption{Representation of the external field (black and white sites) and the ${\Lambda}$-blocks on the torus. Red lines indicate the set $\mathcal{P}$ of lines of reflection. (a) Cell-board model with $L_1=3$ and $L_2=2$. (b) The model studied in section \ref{sec:nardi} with $L=2$; it can be considered as the cell-board model with $L_1=\infty$ and $L_2=2$.}
\end{figure}
\vspace{0.1cm}

Let $\Lambda$ be the  minimal block, obtained with divisions by $\mathcal P$, which contains the origin, that is
\begin{equation}\label{lamb}
\Lambda=\Bigl\{ (t_1,t_2)\in\TT_N:\: \Bigl| t_1 + \frac{1}{2} \Bigr| \le \frac{L_1}{2},\ \Bigl| t_2 + \frac{1}{2} \Bigr| \le \frac{L_2}{2} \Bigr\}.
\end{equation}
A corresponding block $\wh{\Lambda}$ on $\wh{\TT}_N$ is 
\begin{equation}\label{lambti}
\wh{\Lambda}=\Bigl\{ (r_1,r_2)\in\wh{\TT}_N:\: \Bigl| r_1 + \frac{1}{2} \Bigr| \le \frac{L_1}{2},\ \Bigl| r_2 + \frac{1}{2} \Bigr| \le \frac{L_2}{2} \Bigr\}.
\end{equation} 

Note that the block $\Lambda$ contains $B_1 B_2$ sites, where
\begin{equation}
\label{blockside}
B_i=
\left\{
\begin{array}{rl}
L_i,&\text{ if }L_i \text{ is even},\\
L_i + 1,&\text{ if }L_i \text{ is odd}.
\end{array}
\right.
\end{equation}

The torus $\wh{\mathbb{T}}_N$ can be covered by translations of $\wh{\Lambda}$,
\begin{equation}
\label{mytil}
\wh{\mathbb{T}}_N = \bigcup_{\mathbf{r} \in \widetilde{\mathbb{T}}_N} (\wh{\Lambda} + \mathbf{r}),
\end{equation}
where $\widetilde{\mathbb{T}}_N = \{ \mathbf{r}=(r_1,r_2) \in \wh{\mathbb{T}}_N :r _1 = nL_1, r_2 = m L_2,\ n,m\in\Z \}$ is a qoutient subgroup of $\TT_N$. Correspondently the torus $\TT_N$ can be covered by sets $(\wh{\Lambda} + \mathbf{r})\cap\TT_N$:
$$
\mathbb{T}_N=\bigcup_{\mathbf{t} \in \widetilde{\mathbb{T}}_N} ({\Lambda} + \mathbf{t}).
$$

The neighboring translations of $\Lambda$ can have a side in common. Let $\Omega(\Lambda)=\{\sigma_\Lambda\}$ be the set of all configurations defined on $\Lambda$ and  let $\mathcal F_\Lambda$ be a $\sigma$-algebra of  events on $\Omega (\Lambda)$. We call $\Lambda$-\textit{event} the events $\mathcal A\in\mathcal F_\Lambda$.

Next we introduce some notions.  For each $s \in \mathbb{T}_N$, the map $\tau_s :  \Omega_N \to  \Omega_N$ is the {\it translation by $s$} defined {as} $(\tau_s \sigma)(t) = \sigma(t-s)$. We consider the lines

\begin{equation}\label{4.12}
Q_i=\left\{(t_1,t_2):\: t_i=-\frac12\right\},\ i=1,2,
\end{equation}
which bisect the block $\Lambda$. The reflections $\vartheta_{Q_1}$ and $\vartheta_{Q_2}$ are out of $\mathcal P$. In particular, these reflections do not shift $\Lambda$ and if $\sigma\in\Omega(\Lambda)$, the corresponding operators  do not preserve energy  
$$
H(\sigma)-H(\theta_{Q_i}(\sigma))=-2\sum_{t\in\Lambda}h(t)\sigma(t).
$$

A \textit{propagation} operator $\pi_\tb$ on $\Omega_N$ is defined with the help of two operators
\begin{equation}
j_\tb^{(i)}=\begin{cases}\label{4.14}
\theta_{Q_i},&\text{ if }t_i\text{ is odd},\\
\mathbf 1,&\text{ otherwise},
\end{cases}
\end{equation}
$i=1,2$, as following 
\begin{equation}\label{propag}
\pi_\tb(\sigma)=\tau_\tb\circ j_\tb^{(1)}\circ j_\tb^{(2)}(\sigma).
\end{equation}
Here $\tb=(t_1,t_2)\in\wt\TT_N$. 
The symbol $\mathbf 1$ above means the identical operator $\mathbf 1(\sigma)=\sigma$. The configuration $\sigma$ is shifted such that its values on $\Lambda$ is moved to $\Lambda+\tb$, with the possible reflections $\theta_{Q_1}$ and $\theta_{Q_2}$, depending on the parities of $t_1$ and $t_2$. The event $\pi_\tb(\mathcal A)$ is a cylindrical set of configurations from $\mathcal F_{\Lambda+\tb}$.

We remark that a propagator is based on reflection through the sides between two neighbors blocks. Let $\Pi_{\wt\TT_N}$ mean the set of all propagations corresponding the torus $\wt\TT_N$. Any propagation $\pi_\tb\in\Pi_{\wt\TT_N}$ is a bijection
$$
\sigma\in\Omega(\Lambda)\leftrightarrow\pi_\tb(\sigma)\in\Omega(\Lambda+\tb).
$$
Thus we can use the inverse map $\pi^{-1}_\tb$.

As we will see in the proof of Proposition \ref{twopoints}, depending on parity of $L_1$ and $L_2$, we work with four similar propagators $\pi^{*(k)}$, $k\in \{(h,1),(h,2),(v,1),(v,2)\}$, related to \textit{double-blocks} or $\Lambda^*$-\textit{blocks}. Particularly, when $L_1$ (resp. $L_2$) is even, we work with horizontal (resp. vertical) $\Lambda^*$-blocks, defined by

\begin{equation}
\label{alllambdas}
\begin{array}{lll}
\Lambda^*_{h,1}=\Lambda\cup\big(\Lambda+(L_1,0)\big), & & \Lambda^*_{h,2} =\Lambda\cup\big(\Lambda-(L_1,0)\big), \\ 
 \Lambda^*_{v,1}=\Lambda\cup\big(\Lambda+(0,L_2)\big),& &  \Lambda^*_{v,2}=\Lambda\cup\big(\Lambda-(0,L_2)\big).
\end{array}
\end{equation}
The associated quotient subgroups of $\TT_N$ are, respectively, given by

\begin{equation}
\begin{array}{lll}
\label{alltori}
{\wt\TT_N^{(h,1)}}=\{\tb=(t_1,t_2)\in{\wt\TT_N}:\:t_1=2nL_1,\ t_2=mL_2\},\\[0.2cm]
{\wt\TT_N^{(h,2)}}=\{\tb=(t_1,t_2)\in{\wt\TT_N}:\:t_1=(2n-1)L_1,\ t_2=mL_2\},\\[0.2cm]
{\wt\TT_N^{(v,1)}}=\{\tb=(t_1,t_2)\in{\wt\TT_N}:\:t_1=nL_1,\ t_2=2mL_2\},\\[0.2cm]
{\wt\TT_N^{(v,2)}}=\{\tb=(t_1,t_2)\in{\wt\TT_N}:\:t_1=nL_1,\ t_2=(2m-1)L_2\}.
\end{array}
 \end{equation}

Further we use a notion $\Lambda^*$-events for sets configurations defined on each of the $\Lambda_{k}^*$-blocks in \eqref{alllambdas}. Moreover, for brevity we denote 

\begin{equation}
\label{ddefinition}
\mathcal{D} :=\{(h,1),(h,2),(v,1),(v,2)\}.
\end{equation}

Now, we state the chessboard estimates.

\begin{theorem}{\rm (Chessboard estimate \cite{Fro1,Fro2,Bisk1,Shl})}
\label{chess-board}
Let $\mu_{\beta,N}$ a measure on $\Omega_N$ which is RP with respect to all reflections between the neighboring blocks $\Lambda + \mathbf{t}, \mathbf{t} \in \widetilde{\mathbb{T}}_N$. Then for any $\Lambda$-events $\mathcal{A}_1, \ldots, \mathcal{A}_m$ and any distinct sites $\mathbf{t}_1, \ldots, \mathbf{t}_m \in \widetilde{\mathbb{T}}_N$,
\begin{equation}
\label{chessboard1}
\displaystyle\mu_{\beta,N}\Bigl( \bigcap^m_{j=1} \pi_{\mathbf{t}_j} (\mathcal{A}_j) \Bigr) \le \displaystyle\prod_{j=1}^m \mu_{\beta,N}\Bigl(\bigcap_{\mathbf{t} \in \widetilde{\mathbb{T}}_N}  \pi_{\mathbf{t}} (\mathcal{A}_j)  \Bigr)^{1/ |\widetilde{\mathbb{T}}_N |}.
\end{equation}

Moreover, for any $\Lambda^*$-events $\mathcal{A}_1, \ldots, \mathcal{A}_m$ and any distinct sites $\mathbf{t}_1, \ldots, \mathbf{t}_m \in \widetilde{\mathbb{T}}_N^{(k)},\ k\in \mathcal{D}$,
\begin{equation}
\label{chessboard1_1}
\displaystyle\mu_{\beta,N}\Bigl( \bigcap^m_{j=1} \pi_{\mathbf{t}_j}^{*(k)} (\mathcal{A}_j) \Bigr) \le \displaystyle\prod_{j=1}^m \mu_{\beta,N}\Bigl(\bigcap_{\mathbf{t} \in \widetilde{\mathbb{T}}_N^{(k)}}  \pi_{\mathbf{t}}^{*(k)} (\mathcal{A}_j)  \Bigr)^{1/ |\widetilde{\mathbb{T}}_N^{(k)} |}.
\end{equation}
\end{theorem}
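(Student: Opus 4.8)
The inequalities \eqref{chessboard1} and \eqref{chessboard1_1} are the chessboard estimate in the form of \cite{Bisk1,Shl,Fro1,Fro2}, and the plan is to deduce them from the Cauchy--Schwarz inequality \eqref{C-S} by the standard iterated-reflection argument. The only point that is specific to the present model, as opposed to the case of single-site blocks, is to verify that the propagators $\pi_\tb$ and $\pi_\tb^{*(k)}$ transform correctly under the reflections $\vartheta_P$, $P\in\mathcal P$; the rest is the standard reflection-positivity machinery. Below I abbreviate $\E:=\E_{\mu_{\beta,N}}$.

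First I would prove the apparently more general statement that for an \emph{arbitrary} family of $\Lambda$-events $(\mathcal A_\tb)_{\tb\in\wt\TT_N}$,
$$
\mu_{\beta,N}\Bigl(\bigcap_{\tb\in\wt\TT_N}\pi_\tb(\mathcal A_\tb)\Bigr)\le\prod_{\tb\in\wt\TT_N}\mu_{\beta,N}\Bigl(\bigcap_{\mathbf s\in\wt\TT_N}\pi_{\mathbf s}(\mathcal A_\tb)\Bigr)^{1/|\wt\TT_N|},
$$
from which \eqref{chessboard1} follows by taking $\mathcal A_\tb=\Omega(\Lambda)$ for every $\tb\notin\{\tb_1,\dots,\tb_m\}$. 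Writing $F=\prod_{\tb}\mathbf 1_{\pi_\tb(\mathcal A_\tb)}$, the left-hand side is $\E(F)$.

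Next comes the single reflection step. Fix a line $P\in\mathcal P$; it is the common boundary of two neighbouring $\Lambda$-blocks, so by hypothesis $\mu_{\beta,N}$ is RP with respect to $\theta_P$, and $P$ induces a reflection $\vartheta_P$ on the index set $\wt\TT_N$ exchanging its two halves $\wt\TT_N^{\,l}$ and $\wt\TT_N^{\,r}$. Factor $F=F_lF_r$ with $F_l=\prod_{\tb\in\wt\TT_N^{\,l}}\mathbf 1_{\pi_\tb(\mathcal A_\tb)}$, which is $\mathcal F^l_P$-measurable, and $F_r$ the analogous $\mathcal F^r_P$-measurable product. Since $\theta_P$ is an involutive bijection $\mathcal F^l_P\leftrightarrow\mathcal F^r_P$, we may write $F_r=\theta_P(G)$ with $G=\prod_{\tb\in\wt\TT_N^{\,l}}\mathbf 1_{\pi_\tb(\mathcal A'_\tb)}$, where the $\Lambda$-event $\mathcal A'_\tb$ is $\mathcal A_{\vartheta_P(\tb)}$ precomposed with the internal reflection $\theta_{Q_1}$, $\theta_{Q_2}$ or their product, dictated by the change of parity of the block index under $\vartheta_P$; checking that this is exactly the rule encoded in \eqref{4.14}--\eqref{propag}, equivalently the identity $\theta_P\circ\pi_\tb=\pi_{\vartheta_P(\tb)}\circ j_P$ with $j_P\in\{\mathbf 1,\theta_{Q_1},\theta_{Q_2},\theta_{Q_1}\theta_{Q_2}\}$ determined by $\vartheta_P$, is the computation referred to above. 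Then \eqref{C-S} yields
$$
\E(F)=\E\bigl(F_l\,\theta_P(G)\bigr)\le\E\bigl(F_l\,\theta_P(F_l)\bigr)^{1/2}\,\E\bigl(G\,\theta_P(G)\bigr)^{1/2},
$$
and a short check shows that each factor on the right is again $\E(\widetilde F)$ for a family of $\Lambda$-events that is now invariant under $\vartheta_P$.

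Finally I would iterate this step over the lines of $\mathcal P$ in a suitable order --- e.g.\ first those halving $\wt\TT_N$ in each coordinate direction, then those halving the halves, and so on; this is cleanest when $N$ is a power of $2$, the general even case being similar and anyway immaterial since $N\to\infty$ eventually --- so that after finitely many steps the family of events is completely homogeneous, and $\E(F)$ is bounded by a product over $\tb$ of powers of $\E\bigl(\prod_{\mathbf s}\mathbf 1_{\pi_{\mathbf s}(\mathcal A_\tb)}\bigr)$ whose exponents are all equal by the symmetry of the procedure and sum to $1$, hence equal $1/|\wt\TT_N|$; this is the displayed bound, and \eqref{chessboard1} follows. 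For \eqref{chessboard1_1} the argument is identical after replacing $\Lambda$ by the double-block $\Lambda^*_k$, the grid $\wt\TT_N$ by the coarser grid $\wt\TT_N^{(k)}$ of \eqref{alltori}, and $\pi_\tb$ by $\pi_\tb^{*(k)}$: the reflections of $\wt\TT_N$ across those lines of $\mathcal P$ that separate two neighbouring $\Lambda^*_k$-blocks, or that bisect a $\Lambda^*_k$-block, are reflections between neighbouring $\Lambda$-blocks, hence symmetries of $\mu_{\beta,N}$ by hypothesis, and since $N$ is even they preserve the double-block tiling, which is all the iteration needs. The main obstacle is exactly the bookkeeping in the reflection step, namely the identity $\theta_P\circ\pi_\tb=\pi_{\vartheta_P(\tb)}\circ j_P$, which must be checked case by case for $P\in\mathcal P_1$ versus $P\in\mathcal P_2$ and according to the parities of $L_1$ and $L_2$ (reflections through bonds vs.\ through sites); the remainder is routine.
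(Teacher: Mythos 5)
The paper does not prove Theorem \ref{chess-board} at all: it defers to \cite{Bisk1} (Theorem 5.8) and \cite{Bisk2} (Theorem 2.4), so there is no in-paper argument to compare yours against. Your sketch is exactly the standard proof from those references, and you correctly isolate the only model-specific content, namely the intertwining identity $\theta_P\circ\pi_\tb=\pi_{\vartheta_P(\tb)}\circ j_P$ together with the fact (Proposition \ref{ourisRP}) that every line of $\mathcal P$ separating or bisecting the relevant blocks is an RP reflection; for \eqref{chessboard1_1} your observation that a line bisecting a $\Lambda^*_k$-block is itself a reflection between neighbouring $\Lambda$-blocks is the right one (though note the paper takes $N$ a multiple of $4$ where double-blocks are involved, so that the $\Lambda^*_k$-tiling closes up on the torus). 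The one place where your write-up is too casual is the final homogenization step: a single application of \eqref{C-S} replaces the assignment $\tb\mapsto\mathcal A_\tb$ by two assignments each symmetric under that one reflection, but iterating over the finitely many lines of $\mathcal P$ does not by itself force all exponents to come out equal to $1/|\wt\TT_N|$ --- earlier symmetrizations are generally disturbed by later ones. The standard treatments close this either by an abstract H\"older-type lemma for functionals satisfying the Cauchy--Schwarz property (Biskup's Lemma 5.7, used with an induction on the number of blocks carrying a non-trivial event) or by an infinite iteration followed by a limiting argument; "the exponents are equal by the symmetry of the procedure and sum to $1$" asserts the conclusion of that lemma rather than proving it. This is a fixable omission in an otherwise correct and entirely standard argument, not a wrong approach.
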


For the proof see, for example, \cite{Bisk1}, Theorem 5.8 or \cite{Bisk2}, Theorem 2.4.

The following quantities play the main role in the proof of the phase transition 
\begin{equation}
\label{zeta}
\mathfrak{z}_{\beta,N} (\mathcal{A} ) : = \mu_{\beta,N} \Bigl(\bigcap_{\mathbf{t} \in \widetilde{\mathbb{T}}_N}  \pi_{\mathbf{t}} (\mathcal{A})  \Bigr)^{1/ |\widetilde{\mathbb{T}}_N |},
\end{equation}
where $\mathcal{A}$ is $\Lambda$-event.
The function $\mathcal{A} \to \mathfrak{z}_{\beta,N}(\mathcal{A})$ is not  additive. However, given the $\sigma$-additivity of $\mu$ and using the chessboard estimate, it is easy to prove that it is sub-additive (see \cite{Bisk1}, Lemma 5.9). That is, for any collection of $\Lambda$-events $\mathcal{A}, \mathcal{A}_1, \mathcal{A}_2, \ldots$ such that
$\mathcal{A} \subset \cup_l \mathcal{A}_l$,
the inequality
\begin{equation}
\label{subz}
\mathfrak{z}_{\beta,N} (\mathcal{A} ) \le \sum_l \mathfrak{z}_{\beta,N} (\mathcal{A}_l)
\end{equation}
holds.
The limiting version of this quantity will be of particular interest for us. Thus, we define
\begin{equation}
\label{zetalim}
\mathfrak{z}_{\beta} (\mathcal{A} ) : = \lim_{N \to \infty} \mathfrak{z}_{\beta,N} (\mathcal{A} ).
\end{equation}
The existence of the limit follows from the sub-additivity.
Furthermore, we define for $\Lambda^*_{k}$-event $\mathcal{A}$, where $k\in \mathcal{D}$, the similar quantities 
\begin{equation}\label{zeta1_1}
\mathfrak{z}_{\beta,N}^{(k)}: = \mu_{\beta,N} \Bigl(\bigcap_{\mathbf{t} \in \widetilde{\mathbb{T}}_N^{(k)}}  \pi_{\mathbf{t}}^{*(k)} (\mathcal{A})  \Bigr)^{1/ |\widetilde{\mathbb{T}}_N^{(k)} |},
\end{equation}
and 
\begin{equation*}
\mathfrak{z}_{\beta}^{(k)} (\mathcal{A} ) : = \lim_{N \to \infty} \mathfrak{z}_{\beta,N}^{(k)}(\mathcal{A} ).
\end{equation*}

\subsection{Phase coexistence}
\label{sec:proof}
The basis of the proof of Theorem~\ref{phase transition} are Propositions \ref{ourisRP}, \ref{prop1} 
and \ref{twopoints} which we shall prove in the subsection \ref{remain}. 

The main applied technics  is  the reflection positivity technique. The proof
essentially consists on two steps. First, the easiest step, Proposition~\ref{ourisRP},  we apply a  known criterion for establish RP property for our model. 
Second,  we construct two measures $\mu_\beta^+$ and $\mu_\beta^-$ and prove that the probabilities $\mu_\beta^ + (\sigma(0) = -1)$ and $\mu_\beta^ - (\sigma(0) = +1)$ can be made less than $1/2$ for large $\beta$. This will prove the phase coexistence. In order to provide this we use the chessboard estimate \eqref{chessboard1} for a sort of Peierls arguments evaluating the contour probabilities. It is implemented in the proof of Proposition~\ref{twopoints}. 

The contour technique we work with are based on usage of thick contours partly assembled of a block set $\{\Lambda+\tb\}$, where $\tb\in\wt\TT_N$  and partly assembled of  the double-blocks $\{\Lambda^*\}$. We shall describe later all details.

We need some preliminary results about $\mathfrak{z}_{\beta, N}$ and $\mathfrak{z}_{\beta, N}^{(k)},\ k\in \mathcal{D}$ (see \reff{ddefinition}), defined in \reff{zeta} and \reff{zeta1_1}. 

\begin{proposition}
\label{ourisRP}
For any $P \in \mathcal{P}$ (see \reff{planes} and below), and all $\beta\ge 0$ the Gibbs measure $\mu_{\beta,N}$ \reff{Gibbs2} on the torus $\mathbb T_N$ is reflection positive (RP) with respect to $\theta_P$.
\end{proposition}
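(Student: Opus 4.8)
The plan is to verify the standard criterion for reflection positivity due to Fröhlich, Israel, Lieb and Simon, as presented in \cite{Bisk2} (Theorem 2.3) or \cite{Bisk1} (Theorem 5.6): a Gibbs measure on the torus is RP with respect to a reflection $\theta_P$ through a hyperplane $P$ provided the Hamiltonian can be written as
\begin{equation*}
H_N = A + \theta_P(A) - \sum_{\alpha} C_\alpha\,\theta_P(C_\alpha),
\end{equation*}
where $A$ is $\mathcal F^l_P$-measurable, each $C_\alpha$ is $\mathcal F^l_P$-measurable, and the coupling constants in front of the cross terms $C_\alpha \theta_P(C_\alpha)$ are nonnegative. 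Here the role of $A$ is to collect all the terms of $H_N$ living entirely in the left half $\TT_N^l$, $\theta_P(A)$ collects the terms living entirely in the right half, and the cross terms are the interactions straddling the line $P$ (and its antipodal copy on the torus). So the first step is to fix $P \in \mathcal P$, split $\TT_N = \TT_N^l \cup \TT_N^r$, and sort the sum \eqref{torusham} defining $H_N$ according to this decomposition.

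The key point — and the only place where the cell-board field structure enters — is that every line $P = P_i^{(n)} \in \mathcal P$, by the very definition \eqref{planes}, bisects the cells $C(n,m)$ and hence is an axis of symmetry of the external field: $h(\vartheta_P(s)) = h(s)$ for all $s$. Therefore the single-site term $-\sum_s h(s)\sigma(s)$ splits cleanly as $-\sum_{s\in\TT_N^l} h(s)\sigma(s) \;-\; \sum_{s\in\TT_N^r} h(s)\sigma(s)$, and the second sum equals $\theta_P$ applied to the first; so the field contributes only to $A$ and $\theta_P(A)$, with no cross term. This is exactly why $\mathcal P$ was chosen as it was, and it is where the hypothesis that the cells are bisected (rather than, say, the cell boundaries lying on $P$) is used. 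For the nearest-neighbor interaction $-J\sum_{\langle t,s\rangle}\sigma(t)\sigma(s)$: bonds with both endpoints in $\TT_N^l$ go into $A$, bonds with both endpoints in $\TT_N^r$ into $\theta_P(A)$, and the remaining bonds are precisely those crossing one of the two antipodal copies of $P$. For each such crossing bond $\langle t, \vartheta_P(t)\rangle$ the term is $-J\,\sigma(t)\,\sigma(\vartheta_P(t)) = -J\,\sigma(t)\,\theta_P(\sigma(t))$, which is of the required form $-C\,\theta_P(C)$ with $C = \sqrt{J}\,\sigma(t)$ (for $k=0$ reflections through sites one must also note that sites on $P$ itself are shared and handled as in the standard references — their on-site field and their mutual bonds lie in the plane and can be split evenly or assigned consistently; since $N$ is even and the blocks tile the torus, this causes no difficulty). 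Since $J > 0$, all cross coefficients are nonnegative, and the criterion applies.

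I would then invoke the cited theorem to conclude \eqref{RP1}–\eqref{RP2}, i.e. that $\mu_{\beta,N}$ is RP with respect to $\theta_P$; since $\beta \ge 0$ multiplies a Hamiltonian of the admissible form, $\beta H_N$ is again of that form and the conclusion holds for every $\beta \ge 0$. The argument is uniform in the choice of $P \in \mathcal P$, in both families $\mathcal P_1$ and $\mathcal P_2$, and for both parities of $L_i$ (the parity only determines whether $\vartheta_P \in \Theta^0$ or $\Theta^{1/2}$, which is absorbed into the standard statement of the FILS criterion for the two cases). I do not expect a genuine obstacle here: the content is the observation that the reflection lines in $\mathcal P$ are symmetry axes of $h$, after which it is a direct application of the FILS characterization. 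The only thing to be careful about is bookkeeping for reflections through sites ($k=0$), where the line $P$ passes through lattice sites and those sites are common to $\TT_N^l$ and $\TT_N^r$; this is entirely routine and handled exactly as in \cite{Bisk1}, Chapter 5.
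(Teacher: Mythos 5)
Your proposal is correct and follows essentially the same route as the paper: both verify the FILS criterion by splitting $H_N$ into left, right, and crossing terms, using that every $P\in\mathcal P$ bisects the cells so that $h(\vartheta_P(s))=h(s)$ (hence no cross term from the field), and treating the two cases $k=0$ (sites and bonds on $P$ shared and weighted by $1/2$) and $k=1/2$ (crossing bonds giving $C\,\theta_P(C)$ with $C=\sqrt{J}\,\sigma(t)$) exactly as in the paper's proof.
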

\noindent

In order to apply the chessboard inequality we introduce {\it bad block events} we deal with. Let $\sigma^+_{\Lambda}$ and $\sigma^-_{\Lambda}$ be the constant  configurations on $\Lambda$ with all spins plus and all spins minus, respectively.

For each configuration $\sigma_{\Lambda} \in  \{-1,+1\}^{\Lambda}$ on $\Lambda$ we define the event
\begin{equation}
\label{badconf}
\mathcal{B} ( \sigma_{\Lambda}) = \{ \sigma \in \Omega_N: \sigma(\Lambda) = \sigma_{\Lambda}\}.
\end{equation}

Let $R(\Lambda)$ be the set of all $\Lambda$-bad configurations, $R(\Lambda) = \{-1,+1\}^{\Lambda} \setminus \{ \sigma^+_{\Lambda}, \sigma^-_{\Lambda} \}$. Remember that the size of the block $\Lambda$ is equal to $B_1B_2$ sites, and $B_i \ge 2$ as defined in \eqref{blockside}. This implies that $|R(\Lambda)|=2^{B_1B_2} -2 \ge 14$.

Let $\mathcal{R}_\Lambda$ denote the event that the block $\Lambda$ is $\sigma$-bad for  $\sigma\in\Omega_N$, that is,

\begin{equation}
\label{setbadblock}
\mathcal{R}_\Lambda  = \left\{ \sigma \in \Omega_N: \sigma(\Lambda) \neq \sigma^{\pm}_{\Lambda}\right\} = \bigcup_{\sigma_{\Lambda} \in R(\Lambda)}\mathcal{B} ( \sigma_{\Lambda}) .
\end{equation}

The event $\mathcal{R}_\Lambda$ is called \textit{$\Lambda$-bad event}. Which represents all the torus configurations that are not constant on $\Lambda$-block.

\vspace{.3cm}

The proof of the main theorem about the phase coexistence is based on the contour technique. Both Peierls contours and thick contours are applied. The thick contours are consisted of the $\Lambda$-blocks and $\Lambda^*$-blocks. The $\Lambda$-block is included to the thick contour if Peierls contour touches this $\Lambda$-block, the $\Lambda^*$-block appears in the thick contour when Peierls contour at least partly passes between neighbouring $\Lambda$-blocks.  It happens  when  the size  $L_i$ in the direction of the block localisations is even.

If $\tb\in\{( L_1,0),(0, L_2)\}$ then the neighbouring $\Lambda$-blocks $\Lambda$ and $\Lambda+\tb$ are not intersected, that is  $\Lambda\cap(\Lambda+(L_1,0))=\emptyset$ $\bigl(\Lambda\cap(\Lambda+(0,L_2))=\emptyset \bigr)$ when  $L_1$ ($L_2$) is even. 

As in \reff{setbadblock} we define the $\Lambda^*$-bad-block events,
\begin{equation}
\mathcal{R}_{\Lambda^*_{k}} =  \left\{ \sigma \in \Omega_N: \sigma(\Lambda^*_{k}) \neq \sigma^{\pm}_{\Lambda^*}\right\},
\end{equation}
where $\sigma^+_{\Lambda^*}$ and $\sigma^-_{\Lambda^*}$ are the constant configurations on each of the $\Lambda^*_{k}$-blocks, $k \in \mathcal{D}$.

In the next proposition we show that the $\Lambda$-bad and $\Lambda^*$-bad events have small probability independently on $N$, when $\beta$ is large.
\begin{proposition}
\label{prop1}
If the condition \reff{Peierls2} holds true, then for any even $N$

\begin{equation}\label{4.28}
\mathfrak{z}_{\beta, N} (  \mathcal{R}_\Lambda) \le 2^{B_1B_2}\exp \left\{-\beta \Bigl(2J-\frac{hL_1L_2}{L_1+L_2} \Bigr) \right\}.
\end{equation}

Moreover, for any $k\in \mathcal{D}$, and any $N$ multiple of 4,
\begin{equation}\label{4.30}
\mathfrak{z}_{\beta, N}^{(k)} (  \mathcal{R}_{\Lambda^*_{k}}) \le 4^{B_1B_2}\exp \left\{-\beta \Bigl(2J-\frac{hL_1L_2}{L_1+L_2} \Bigr) \right\}.
\end{equation}
\end{proposition}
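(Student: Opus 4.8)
The plan is to establish both \reff{4.28} and \reff{4.30} by the same three–step scheme: a union bound via sub-additivity of $\mathfrak{z}$, the identification of the fully reflected event with a single configuration, and a per-block energy estimate supplied by the Peierls bound of Theorem~\ref{pecherski}.

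First I would use the sub-additivity \reff{subz} together with the decomposition $\mathcal{R}_\Lambda=\bigcup_{\sigma_\Lambda\in R(\Lambda)}\mathcal{B}(\sigma_\Lambda)$ from \reff{setbadblock} to get $\mathfrak{z}_{\beta,N}(\mathcal{R}_\Lambda)\le\sum_{\sigma_\Lambda\in R(\Lambda)}\mathfrak{z}_{\beta,N}(\mathcal{B}(\sigma_\Lambda))$. Since $|R(\Lambda)|=2^{B_1B_2}-2\le 2^{B_1B_2}$, it suffices to bound $\mathfrak{z}_{\beta,N}(\mathcal{B}(\sigma_\Lambda))$, for a fixed bad $\sigma_\Lambda$, by $\exp\{-\beta(2J-hL_1L_2/(L_1+L_2))\}$. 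Now the event $\bigcap_{\tb\in\wt\TT_N}\pi_\tb(\mathcal{B}(\sigma_\Lambda))$ entering \reff{zeta} is a single configuration $\sigma^{\#}\in\Omega_N$, obtained by laying translated/reflected copies of $\sigma_\Lambda$ on every block $\Lambda+\tb$ (each $\pi_\tb$ is a bijection on blocks and the blocks tile $\mathbb{T}_N$, so $\sigma^{\#}$ is well defined and consistent). Hence, using $|\wt\TT_N|=N^2$ and the trivial lower bound $Z_N(\beta)\ge e^{-\beta H_N(\sigma^+)}$,
\begin{equation*}
\mathfrak{z}_{\beta,N}(\mathcal{B}(\sigma_\Lambda))=\Bigl(\frac{e^{-\beta H_N(\sigma^{\#})}}{Z_N(\beta)}\Bigr)^{1/N^2}\le\exp\Bigl\{-\frac{\beta}{N^2}\bigl(H_N(\sigma^{\#})-H_N(\sigma^+)\bigr)\Bigr\}.
\end{equation*}

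The remaining, and main, point is the energy estimate $H_N(\sigma^{\#})-H_N(\sigma^+)\ge N^2 c_P$ with $c_P=2J-hL_1L_2/(L_1+L_2)$. I would localize the excess energy over the $N^2$ blocks: since $\sigma_\Lambda$ is not constant, $\sigma^{\#}$ is non-constant on every block $\Lambda+\tb$, so both $V=\{\sigma^{\#}=-1\}$ and its complement meet every block, the contour $\partial V$ therefore crosses each block, and a bookkeeping that assigns to each block an interior edge of $\partial V$ without double counting gives $|\partial V|\ge N^2$. Plugging this into the Peierls inequality \reff{peierls1} of Theorem~\ref{pecherski}, $H_N(\sigma^{\#})-H_N(\sigma^+)\ge c_P|\partial V|\ge N^2 c_P$, and combining with the previous display and the union bound yields \reff{4.28}. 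The hard part is exactly this bookkeeping: one must check that the reflections built into the propagators — through the lines of $\mathcal{P}$, which are symmetries of both the interaction and the field, and through the auxiliary lines $Q_i$, which do flip the field at the block seams — do not let the periodized bad configuration undercut the Peierls cost, i.e. that its per-block contribution is never below $c_P$; this is where hypothesis \reff{Peierls2} enters, through Theorem~\ref{pecherski}.

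Finally, \reff{4.30} follows by running the identical argument with $\Lambda$ replaced by a double-block $\Lambda^*_k$ and $\wt\TT_N$ by $\wt\TT_N^{(k)}$. Now $|\wt\TT_N^{(k)}|=N^2/2$ (which is why $N$ is taken to be a multiple of $4$, so that the double-blocks tile $\mathbb{T}_N$), a bad $\Lambda^*_k$-configuration produces a periodized configuration with $H_N(\sigma^{\#})-H_N(\sigma^+)\ge (N^2/2)c_P$ because each of the $N^2/2$ bad double-blocks still carries Peierls cost at least $c_P$, and the number of bad configurations on $\Lambda^*_k$ is $2^{2B_1B_2}-2\le 4^{B_1B_2}$; together these give exactly \reff{4.30}.
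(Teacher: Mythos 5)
Your outline matches the paper's proof up to the decisive step, but the decisive step is missing. The union bound over $|R(\Lambda)|\le 2^{B_1B_2}$ bad configurations, the identification of $\bigcap_{\tb}\pi_\tb(\mathcal B(\sigma_\Lambda))$ with a single periodized configuration, and the bound $Z_N(\beta)\ge e^{-\beta H_N(\sigma^+)}$ are all exactly as in the paper (cf.\ \reff{eqprop2} and \reff{z.add.lem3}). The gap is the energy estimate. You write ``plugging this into the Peierls inequality \reff{peierls1} of Theorem~\ref{pecherski}'', but that inequality is stated for \emph{local} (finite) perturbations of a ground state on $\Z^2$, whereas the periodized configuration $\sigma^{\#}$ lives on the torus and its contour can wind around $\TT_N$; the set $V$ is not a finite perturbation and Theorem~\ref{pecherski} does not apply to it as stated. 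The content that must be proved --- and that you explicitly defer as ``the hard part'' without carrying it out --- is precisely that the field term cannot undercut the interaction cost by more than $hL_1L_2/(L_1+L_2)$ per contour edge \emph{on the torus}. The paper does this in Lemma~\ref{lem.hb}: decompose $V$ into maximal horizontal and vertical runs of sites; closed (winding) runs contribute zero field; each open run contributes at most $hL_1$ (horizontal) or $hL_2$ (vertical) in absolute value and is responsible for two edges of $\partial^hV$ resp.\ $\partial^vV$ (estimates \reff{ham.vert3}--\reff{ham.vert4}); then the convex combination with weights $L_1/(L_1+L_2)$ and $L_2/(L_1+L_2)$ produces the per-edge constant $2J-hL_1L_2/(L_1+L_2)$. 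Without this argument your assertion that ``the per-block contribution is never below $c_P$'' is unsupported, and it is the only place where hypothesis \reff{Peierls2} actually does work.

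A secondary weakness: your count $|\partial V|\ge N^2$ by assigning to each block a distinct interior edge is not automatic, because when $L_i$ is odd neighbouring $\Lambda$-blocks share a side of sites (this is why $B_i=L_i+1$ in \reff{blockside}), so an ``interior'' edge can belong to two blocks, and the propagated configuration is reflected across exactly those shared sides. The paper sidesteps both this and the winding issue at once through Lemma~\ref{lem.per}: the propagated configuration is $2\tb$-periodic, so $H_N(\sigma_{\Lambda,N})=(N/2)^2H_{[2\times2]}(\sigma_{\Lambda,[2\times2]})$ exactly, and one only needs the torus Peierls bound with $|\partial V|\ge 4$ on the small torus $\TT^{[2\times2]}$. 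I recommend adopting that reduction, or else supplying the torus Peierls argument on $\TT_N$ directly; either way Lemma~\ref{lem.hb}'s line decomposition cannot be replaced by a citation of Theorem~\ref{pecherski}. The treatment of the double-block case and the constants $4^{B_1B_2}$, $|\wt\TT_N^{(k)}|=N^2/2$ in your last paragraph are fine once the $\Lambda$-case is repaired.
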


Now, we can state the main proposition.  
\begin{proposition}
\label{twopoints}
Let the condition \reff{Peierls2} holds true. There exists a constant $c >1$  such that for any $s, t \in \mathbb{T}_N$, the following inequality holds

{\small \begin{equation}
 \mu_{\beta,N} \left( \sigma(s) = +1, \sigma(t) = -1 \right) \le 2 c(c+1) 2^{B_1B_2 / 2}\exp \left\{-\frac{\beta}{8} \Bigl(2J-\frac{hL_1L_2}{L_1+L_2} \Bigr) \right\},
\end{equation}
}
for any
{\small \begin{equation}\label{betaP3}
\beta>\beta'=\frac{4[(B_1B_2+2)\ln 2+2\ln(c(c+1))]}{2J-\frac{hL_1L_2}{L_1+L_2}}.
\end{equation}}
\end{proposition}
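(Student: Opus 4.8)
The plan is to implement a Peierls-type argument on the torus, but with \emph{thick} (block) contours so that the chessboard estimate of Theorem~\ref{chess-board} can be brought to bear. Fix $s,t\in\TT_N$ with $\sigma(s)=+1$ and $\sigma(t)=-1$. Any configuration in this event must contain a Peierls contour in the dual lattice separating $s$ from $t$; I would translate this into the statement that there is a closed circuit of $\Lambda$-blocks (and, where a side $L_i$ is even, double-blocks $\Lambda^*_k$) on which $\sigma$ is not constant, i.e.\ a circuit of bad blocks separating $s$ from $t$. Concretely, for each unit dual edge crossed by the Peierls contour I attach the $\Lambda$-block containing the corresponding lattice edge; when the Peierls contour runs through the gap between two non-intersecting neighbouring $\Lambda$-blocks (which happens exactly when the relevant $L_i$ is even), I instead attach the appropriate double-block $\Lambda^*_k$. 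This yields a connected closed ``thick contour'' $\Gamma$ of bad blocks around one of the two points, of some combinatorial length $n=|\Gamma|$, whose number of realizations of size $n$ is at most (const)$\cdot c^{\,n}$ with $c$ the connectivity constant alluded to in the statement.

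Next I would apply the chessboard estimate to bound the probability of a fixed thick contour. Writing $\Gamma$ as a disjoint union of $\Lambda$-block occurrences at sites $\tb_j\in\wt\TT_N$ together with $\Lambda^*_k$-block occurrences at sites in $\wt\TT_N^{(k)}$, the event ``$\Gamma$ is bad'' is contained in $\bigcap_j \pi_{\tb_j}(\mathcal R_\Lambda)$ intersected with the analogous double-block events, so \reff{chessboard1} and \reff{chessboard1_1} give
\begin{equation*}
\mu_{\beta,N}(\Gamma\text{ bad})\ \le\ \mathfrak z_{\beta,N}(\mathcal R_\Lambda)^{\,n_1}\ \prod_{k\in\mathcal D}\mathfrak z_{\beta,N}^{(k)}(\mathcal R_{\Lambda^*_k})^{\,n_2^{(k)}},
\end{equation*}
where $n_1+\sum_k n_2^{(k)}$ is comparable to $n$ (each double-block covers two ordinary blocks, so up to a universal factor these exponents control $n$). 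Invoking Proposition~\ref{prop1}, each factor is at most $4^{B_1B_2}\exp\{-\beta(2J-hL_1L_2/(L_1+L_2))\}$, so $\mu_{\beta,N}(\Gamma\text{ bad})\le \bigl(4^{B_1B_2}e^{-\beta c_P}\bigr)^{\,\alpha n}$ for a fixed $\alpha>0$ coming from the block-counting bookkeeping; a careful accounting of this $\alpha$ together with the $2^{B_1B_2/2}$ and the factor $\tfrac18$ in the exponent is what produces the precise constants in the statement.

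Then I would sum over all thick contours $\Gamma$ of bad blocks encircling $s$ (or $t$), using the standard estimate $\sum_{n\ge n_{\min}} c^{\,n}\bigl(4^{B_1B_2}e^{-\beta c_P}\bigr)^{\alpha n}$: once $\beta>\beta'$ as defined in \reff{betaP3}, the factor $c\cdot(4^{B_1B_2}e^{-\beta c_P})^{\alpha}$ is $<1$, the geometric series converges, and its sum is dominated by (const)$\cdot c(c+1)\,2^{B_1B_2/2}e^{-\frac{\beta}{8}c_P}$, with an extra factor $2$ to account for the dichotomy ``contour around $s$'' versus ``contour around $t$''. This yields exactly the claimed bound
\begin{equation*}
\mu_{\beta,N}(\sigma(s)=+1,\sigma(t)=-1)\ \le\ 2c(c+1)2^{B_1B_2/2}\exp\Bigl\{-\tfrac{\beta}{8}\bigl(2J-\tfrac{hL_1L_2}{L_1+L_2}\bigr)\Bigr\}.
\end{equation*}

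The main obstacle I expect is the geometric/combinatorial bookkeeping of the thick contour: showing that the event $\{\sigma(s)=+1,\sigma(t)=-1\}$ genuinely forces a \emph{connected, closed} circuit of bad $\Lambda$- and $\Lambda^*$-blocks separating $s$ from $t$, that the sites carrying these blocks are distinct elements of the appropriate quotient subgroups $\wt\TT_N$, $\wt\TT_N^{(k)}$ so that Theorem~\ref{chess-board} actually applies, and — most delicately — handling the interface between ordinary blocks and double-blocks when only one of $L_1,L_2$ is even, so that no dual edge of the Peierls contour is left uncovered and no block is double-counted. The analytic part (applying Proposition~\ref{prop1}, the geometric series, matching constants) is then routine; it is the precise definition of the thick contour and the verification that its weight factorizes correctly under the chessboard estimate that carries the real content, and this is presumably where the proof of Proposition~\ref{twopoints} in section~\ref{remain} concentrates its effort.
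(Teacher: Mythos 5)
Your proposal follows essentially the same route as the paper: external Peierls contour $\to$ thick contour of bad $\Lambda$- and $\Lambda^*$-blocks, chessboard estimate plus Proposition \ref{prop1} for the probability of a fixed thick contour, then an entropy factor $c^n$ and a geometric series. The one step you elide is how the mixed intersection of $\Lambda$-events and the four types of $\Lambda^*$-events is handled — the chessboard estimates \reff{chessboard1} and \reff{chessboard1_1} each apply only to events of a single type, so the paper first applies an iterated Cauchy--Schwarz (itself a consequence of RP) to split the probability into a square root of the $\Lambda$-part times eighth roots of the four $\Lambda^*$-parts, which is precisely what produces the exponents $|\mathcal E_0|/2$ and $|\mathcal E_k|/8$ and hence the $\beta/8$ in the final bound.
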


The constant $c$ appears from the combinatorial argument related to the number of contours builded from $n$ $\Lambda$- and $\Lambda^*$-blocks.

\subsubsection{Proof of Theorem \ref{phase transition}.}

First of all, we use the following symmetry of the torus measure. Let ${\Lambda}_s$ be the block containing the site $s \in \mathbb{T}_N$, then
\begin{equation}
\label{prova1a}
\mu_{\beta,N} \left(\sigma: \sigma(\Lambda_s) \equiv +1 \right)   = \mu_{\beta,N} \left(\sigma: \sigma(\Lambda_s) \equiv -1 \right) = \frac{1 - \mu_{\beta,N} ( \mathcal{R}(\Lambda_s))}{2},
\end{equation}
for any $s \in \mathbb{T}_N$. In order to check this equality we apply the following two transformations for any configuration $\sigma \in \Omega_N$, such that $\sigma(\Lambda) \equiv +1$. First, we apply on $\sigma$ the reflection operator $\theta_{Q_1}$, defined as in section \ref{sec:RP}, where $Q_1$ is given by \reff{4.12}. That is, $\omega := \theta_{Q_1}(\sigma) \in \Omega_N$ takes the value $\omega(t_1,t_2)= \sigma(-t_1-1,t_2)$, for all $t=(t_1,t_2) \in \mathbb{T}_N$. Second, we obtain $\sigma^{\prime}= - \omega$, flipping all the spin values. In other words, $\sigma^{\prime}(t)= - \sigma(-t_1-1,t_2)$, for all $t \in \mathbb{T}_N$. Clearly, $\sigma^{\prime}(\Lambda) \equiv -1$ and given $h_{(t_1,t_2)} = - h_{(-t_1- 1,t_2)}$, the Hamiltonians are equal $H_N(\sigma) = H_N (\sigma^{\prime})$. It proves \reff{prova1a}.

Since the symmetry property of the model the following equalities hold
$$
\mu_{\beta,N}(\mathcal R_\Lambda)=\mu_{\beta,N}(\mathcal R_{\Lambda+\tb}),
$$
for any $\tb\in\wt{\TT}_N$. Therefore we omit sometimes the index $\Lambda$ at $\mathcal R$.

Using the chessboard estimate \eqref{chessboard1}, we obtain the inequality

\begin{equation}
\label{4.39}
 \mu_{\beta,N} ( \mathcal{R}) \le \mu_{\beta,N} \Bigl(\bigcap_{\mathbf{t} \in \widetilde{\mathbb{T}}_N}  \pi_{\mathbf{t}} (\mathcal{R})  \Bigr)^{1/ |\widetilde{\mathbb{T}}_N |} = \mathfrak{z}_{\beta,N} (\mathcal{R} ).
\end{equation}
Then, from \eqref{prova1a}
\begin{equation}
\label{prova1b}
\mu_{\beta,N} \left(\sigma: \sigma(\Lambda_s) \equiv +1 \right)  \ge \frac{1 - \mathfrak{z}_{\beta,N}(\mathcal{R})}{2}.
\end{equation}
Let $t \in \mathbb{T}_N$ such that $t = s + (NL_1/2,0)$, and define
\begin{equation}
\label{measures}
\mu_{\beta,N}^{\pm} ( \cdot )  : = \mu_{\beta,N} ( \cdot \mid \sigma(t) = \pm 1) .
\end{equation}
By \eqref{prova1b}, \reff{4.28}, and Proposition \ref{twopoints} we have
\begin{equation}\label{muplus}
\begin{array}{ll}
\mu_{\beta,N}^+ \left( \sigma(s) = -1 \right)   & \le \displaystyle\frac{\mu_{\beta,N} ( \sigma(s) = -1 , \sigma(t) = + 1)}{\mu_{\beta,N} \left(\sigma: \sigma(\Lambda_t) \equiv +1 \right)} \\[0.4cm]
& \le\displaystyle\frac{4c(c+1) 2^{B_1B_2/2}\exp\left\{-\frac{\beta}{8}\left(2J-h\frac{L_1L_2}{L_1+L_2}\right)\right\} }{1 - 2^{B_1B_2}\exp\left\{-\beta\left(2J-h\frac{L_1L_2}{L_1+L_2}\right)\right\}},
\end{array}
\end{equation}
and
\begin{equation}
\label{muminus}
\mu_{\beta,N}^- \left( \sigma(s) = +1 \right)  \le\displaystyle\frac{4c(c+1) 2^{B_1B_2/2}\exp\left\{-\frac{\beta}{8}\left(2J-h\frac{L_1L_2}{L_1+L_2}\right)\right\} }{1 - 2^{B_1B_2}\exp\left\{-\beta\left(2J-h\frac{L_1L_2}{L_1+L_2}\right)\right\}}.
\end{equation}

When  $N\nearrow\infty$ we extract from the sequences of the measures $(\mu_{\beta,N}^+)$ and $(\mu_{\beta,N}^-)$, two converging subsequences. Let  $\mu^+_{\beta}$ and $\mu^-_{\beta}$ be corresponding limits. Those measures are infinite-volume Gibbs measures corresponding to the Hamiltonian $H$ (\reff{ham.chess} and \reff{external}). It follows from DLR-equation that those measures are Gibbsian (see \cite{Bisk1}).

By \eqref{muplus} and \eqref{muminus} the inequalities \eqref{theorem1} are satisfied if
\begin{equation}\label{meq}
16c(c+1) 2^{B_1B_2}\exp\left\{-\frac{\beta}{8}\left(2J-h\frac{L_1L_2}{L_1+L_2}\right)\right\} < 1.
\end{equation}

This inequality means that the phase transition holds for all
\begin{equation}\label{4.50}
\beta > \frac{8[(B_1B_2+4)\ln2+\ln(c(c+1))]}{2J-\frac{hL_1L_2}{L_1+L_2}}.
\end{equation}
\qed

\subsection{Remaining proofs}\label{remain}

\subsubsection{Proof of Proposition~\ref{ourisRP}.}

The proof is the application of the known criteria for a measure to be reflection positive.  Fix a line $P \in \mathcal{P}$ of the reflections and let $\theta_P$ be the corresponding reflection operator. The criteria applied to our case claims that the measure $\mu_{\beta,N}$ is reflection positive, if its Hamiltonian can be represented in the form
\begin{equation}
\label{criterion}
-H_N = A + \theta_P( A) + \sum_{\alpha} C_{\alpha} \theta_P (C_{\alpha}),
\end{equation}
where $A, C_{\alpha}$ are $\mathcal{F}^l_P$-measurable functions. Then for all $\beta \ge 0$ the torus Gibbs measure $\mu_{\beta,N}$, is RP with respect to $\theta_P$ (see Definition \ref{def.RP}). The criteria can be found in Theorem 2.1 of Shlosman \cite{Shl} or Corollary 5.4 of Biskup \cite{Bisk1}.

In our case there are two possibilities for $P \in \mathcal{P}$: $P$ passes trough sites of $\mathbb{T}_N$ or not. In the case of $P$ passing through the sites of $\mathbb{T}_N$ choose 

{\small \begin{eqnarray*}
A &=& J \sum_{ \substack{\langle t, s \rangle: \\  t \in \mathbb{T}^l_N, s \in \mathbb{T}^l_N\setminus P}} \sigma(t) \sigma(s) + \frac J2 \displaystyle\sum_{\langle t, s \rangle \in P} \sigma(t) \sigma(s)+ \sum_{s \in \mathbb{T}^l_N \setminus P} h(s) \sigma(s)+ \frac12\sum_{s \in P} h(s) \sigma(s),
\end{eqnarray*}}
then, since $h(s)=h(\vartheta_P(s))$, 
\begin{equation*}
-H_N(\sigma) = A + \theta_P(A),
\end{equation*}
here functions $C_\alpha\equiv 0$. In the case of reflections through the bonds choose 
\begin{eqnarray*}
A &=& J \displaystyle\sum_{ \langle t, s \rangle \in \mathbb{T}^l_N} \sigma(t) \sigma(s) + \displaystyle\sum_{s \in \mathbb{T}^l_N} h(s) \sigma(s),
\end{eqnarray*}
then
\begin{equation*}
-H_N (\sigma) =   A + \theta_P (A)
+ J\displaystyle\sum_{\substack{t\in \mathbb{T}^l_N:\\ |t-P|=1/2}}\sigma(t)\theta_P(\sigma(t)).
\end{equation*}
The equality $h(s)=h(\vartheta_P(s))$ is used again.
That proves the proposition. \qed

\subsubsection{Proof of Proposition~\ref{prop1}.}

For simplicity, we only prove \eqref{4.28}, but the proof for each $\Lambda^*_{k}$-bad-event, $k \in \mathcal{D}$ is the same. We justify the condition $N$ multiple of $4$ in \eqref{4.30}, since the number of $\Lambda$-blocks fulfilling the whole torus $\TT_N$, is twice the number of $\Lambda^*$-blocks needed.

Let $\sigma_{{\Lambda},N} : = \cap_{\mathbf{t} \in \widetilde{\mathbb{T}}_N}  \pi_{\mathbf{t}} (  \mathcal{B} ( \sigma_{\Lambda})  )$ be the configuration on $\mathbb{T}_N$, obtained by the propagations of a fixed block configuration $\sigma_{\Lambda}$. 
The proof of the proposition~\ref{prop1} will be based on  the following inequality

{\small \begin{equation}
\label{eqprop2}
\mathfrak{z}_{\beta,N} ( \mathcal{B} ( \sigma_{\Lambda}) )^{| \widetilde{\mathbb{T}}_N |} = \frac{\exp(-\beta H_N(\sigma_{\Lambda,N}))}{Z_N(\beta)} \le \exp\bigl( -\beta \left[H_N(\sigma_{\Lambda,N}) - H_N(\sigma^+)\right] \bigr).
\end{equation}}
A bound of the right hand side  of \reff{eqprop2} can be found from the next two lemmas. 

In order to formulate the first lemma we introduce some notions. Consider the configuration $\sigma_{\Lambda,N}$ defined above. For any $\sigma_\Lambda$ the configuration $\sigma_{\Lambda,N}$ has the following periodicity property: for any $t \in \mathbb{T}_N$ and $\mathbf{t} \in \widetilde{\mathbb{T}}_N$ we have
\begin{equation}
\label{periodicity1}
\sigma_{{\Lambda},N} ( t ) = \sigma_{{\Lambda},N}(t + 2 \mathbf{t}).
\end{equation}
It means that there exists some ``minimal" sublattice $\Lambda^{[2\times 2]}$ of $\TT_N$, such that the configuration $\sigma_{\Lambda,N}$ can be obtained by  translations of $\sigma_{\Lambda,N}(\Lambda^{[2\times 2]})$. Indeed, using the rectangle $\wh\Lambda$ defined by \reff{lambti} let us define
{\small \begin{eqnarray}
\wh\Lambda^{[2\times 2]} &:=& 
 \wh\Lambda \cup \left(\wh\Lambda + (0, L_2) \right) \cup 
\left( \wh\Lambda + (L_1, 0) \right) \cup \left(\wh\Lambda + (L_1, L_2) \right)
+ \left( \frac{1}{4}, \frac{1}{4} \right),  \nonumber \\
\Lambda^{[2\times 2]} &:=& \mathbb{T}_N \cap \wh\Lambda^{[2\times 2]}. \label{torus_2}
\end{eqnarray}}
See Figure~\ref{blocktorus} for illustration.
\begin{figure}[ht]
\begin{center}
\epsfig{file=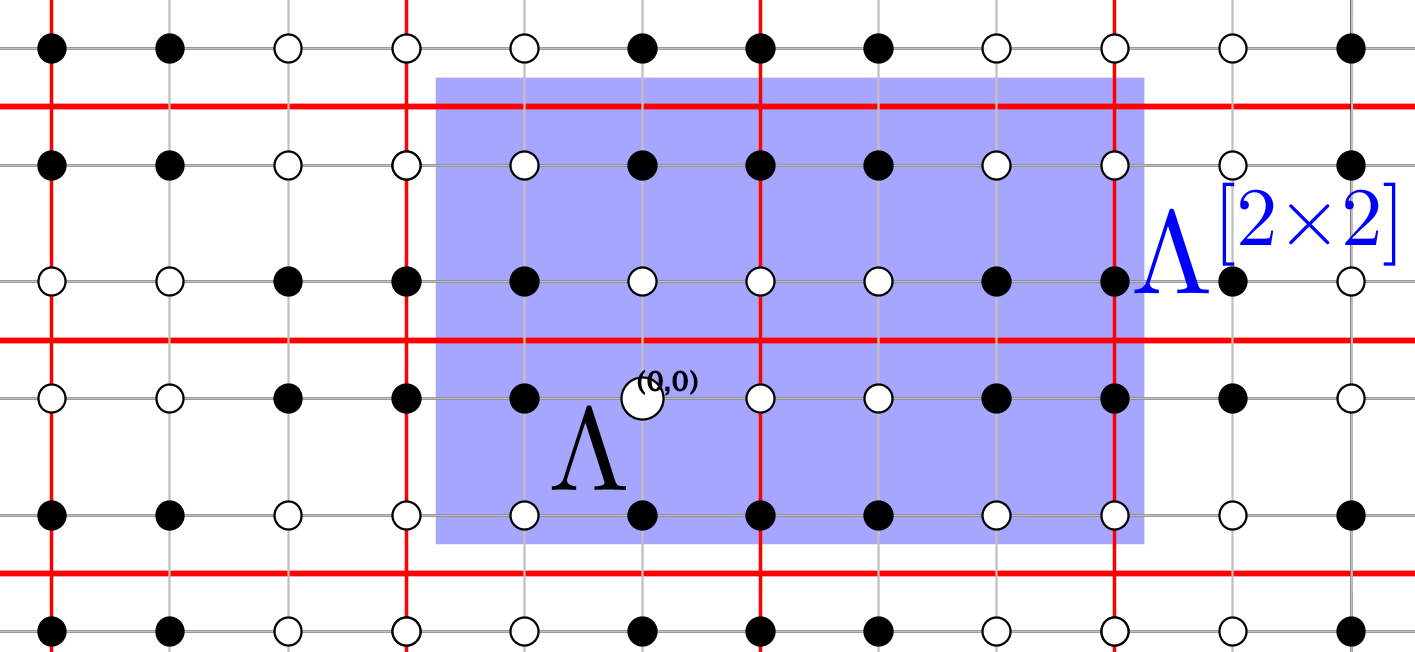, height=3.5cm, width=6.5cm}
\end{center}
\caption{The sublattice $\Lambda^{[2\times2]}$ is composed by the sites inside the blue rectangle $\wh\Lambda^{[2\times2]}$. Here $L_1=3$ and $L_2=2$.}\label{blocktorus}
\end{figure}

Remark that $(\Lambda^{[2\times 2]} +  2\mathbf{t}_1) \cap (\Lambda^{[2\times 2]} +  2\mathbf{t}_2) = \o$, if $\mathbf{t}_1 \neq \mathbf{t}_2$, with $\mathbf{t}_1, \mathbf{t}_2 \in \widetilde{\mathbb{T}}_N$. For any $\mathbf{t}\in \widetilde{\mathbb{T}}_N$, we have $h(s) = h(s+2\mathbf{t})$, as well as,

{\small \begin{equation}
\label{periodicity2}
\sigma_{{\Lambda},[2\times 2]} : =\sigma_{{\Lambda},N} (\Lambda^{[2\times 2]}) =  \sigma_{{\Lambda},N} (\Lambda^{[2\times 2]} + 2\mathbf{t})\mbox{  and  }
\mathbb{T}_N = \bigcup_{\mathbf{t} \in \widetilde{\mathbb{T}}_N} (\Lambda^{[2\times 2]} + 2\mathbf{t}).
\end{equation}}

Let $\widetilde{\TT}^{[2\times 2]}_N = \{ \tb = ( t_1, t_2) \in \mathbb{T}_N : t_1 = 2nL_1, t_2= 2mL_2,\ n,m\in\Z \}$ and $\TT^{[2\times 2]}=\TT_N/\wt\TT^{[2\times 2]}_N$.
Let us define the torus Hamiltonian: for any $\sigma \in \{-1,+1\}^{\mathbb{T}^{[2\times 2]}}$
\begin{equation}
\label{torus_2ham}
H_{[2\times 2]}(\sigma)= -J \displaystyle\sum_{\langle t, s \rangle \in \TT^{[2\times 2]} } \sigma(t) \sigma(s) - \displaystyle\sum_{s \in \TT^{[2\times 2]} }h(s) \sigma(s).
\end{equation}
The sites $\left(-\lfloor \frac{2L_1+1}4\rfloor,t_2\right)$\footnote{$\lfloor x \rfloor$, denotes the floor function, that is: $\lfloor x \rfloor=\max\, \{m\in\mathbb{Z}\mid m\le x\}$.} and $\left(\lfloor \frac{3L_1-1}2\rfloor,t_2\right)\in\Lambda^{[2\times 2]}$, are neighbours in $\TT^{[2\times 2]}$ for any $t_2\in\left[-\lfloor \frac{2L_2+1}{4}\rfloor,\lfloor \frac{3L_2-1}2\rfloor\right]$. As well as, $\left(t_1,-\lfloor\frac{2L_2+1}4\rfloor\right)$ and $\left(t_1,\lfloor\frac{3L_2-1}2\rfloor\right)\in\Lambda^{[2\times 2]}$, are neighbours when $t_1\in\left[-\lfloor\frac{2L_1+1}4\rfloor,\lfloor\frac{3L_1-1}2\rfloor \right]$. 
\medskip

\begin{lemma}
\label{lem.per}
For any  configuration $\sigma_{\Lambda} \in \{-1,+1\}^{\Lambda}$, then
\begin{equation}
\label{eqprop3}
H_N ( \sigma_{{\Lambda},N} ) = \left(\frac{N}{2}\right)^2H_{[2\times 2]}(\sigma_{\Lambda,[2\times 2]}).
\end{equation}

\end{lemma}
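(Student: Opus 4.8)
The plan is to verify the identity \eqref{eqprop3} by exploiting the two periodicities that the propagated configuration $\sigma_{\Lambda,N}$ enjoys, together with the compatibility of the external field with both of them. First I would recall that, by \eqref{periodicity1} and \eqref{periodicity2}, the configuration $\sigma_{\Lambda,N}$ is completely determined by its restriction to $\Lambda^{[2\times2]}$, and that $\TT_N$ is tiled by the $(N/2)^2$ disjoint translates $\Lambda^{[2\times2]}+2\tb$, $\tb\in\widetilde{\TT}_N$; correspondingly the external field satisfies $h(s)=h(s+2\tb)$. Thus both the interaction term and the field term of $H_N(\sigma_{\Lambda,N})$ decompose into $(N/2)^2$ pieces, one for each translate. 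The whole point is to identify each such piece with the corresponding term of $H_{[2\times2]}(\sigma_{\Lambda,[2\times2]})$.

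The key step is the bookkeeping of the bonds. I would split the nearest-neighbour pairs $\langle t,s\rangle$ of $\TT_N$ into those with both endpoints inside a single translate $\Lambda^{[2\times2]}+2\tb$ and those straddling the boundary between two adjacent translates. For interior bonds the count is immediate: each translate contributes exactly the interior-bond energy of $\sigma_{\Lambda,[2\times2]}$, and summing over the $(N/2)^2$ translates gives that many copies. For the straddling bonds one must check that, because of the reflection structure built into the propagators $\pi_\tb$ (through the operators $j_\tb^{(i)}=\theta_{Q_i}$ when $t_i$ is odd), the spin values seen across a boundary of $\Lambda^{[2\times2]}$ are exactly the spin values that the toric Hamiltonian $H_{[2\times2]}$ pairs up via its wrap-around bonds — precisely the neighbour identifications spelled out just before the lemma, e.g. $\bigl(-\lfloor\tfrac{2L_1+1}{4}\rfloor,t_2\bigr)$ with $\bigl(\lfloor\tfrac{3L_1-1}{2}\rfloor,t_2\bigr)$. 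Here is where the evenness of $N$ matters: it guarantees that the parities of the propagator labels are consistent around the torus, so no bond is counted with the wrong sign or left out. Once this matching is established, each straddling bond of $\TT_N$ is counted once per translate as well, giving again the factor $(N/2)^2$ in front of the toric boundary energy. The field term is easier: $h(s)\sigma_{\Lambda,N}(s)$ is invariant under $s\mapsto s+2\tb$, so summing over $\TT_N$ just multiplies the sum over one fundamental domain $\Lambda^{[2\times2]}$ by $(N/2)^2$, and that sum is exactly the field term of $H_{[2\times2]}$.

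I expect the main obstacle to be precisely the careful treatment of the straddling bonds and the verification that the reflections $\theta_{Q_i}$ implicit in the propagators produce exactly the wrap-around identifications of the small torus $\TT^{[2\times2]}$, rather than some shifted or reflected variant. This requires tracking the parity of each coordinate of $2\tb$ and checking that, say, the right edge of one copy of $\Lambda^{[2\times2]}$ is glued to the left edge of the neighbouring copy with the orientation dictated by \eqref{propag} and \eqref{4.14}; a short case analysis on the parities of $L_1$ and $L_2$ (which governs whether the reflecting lines $Q_i$ pass through sites or bonds) finishes it. Everything else — the interior-bond count and the field term — is routine reindexing using \eqref{periodicity1}–\eqref{periodicity2}. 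Assembling the interior contribution, the boundary contribution, and the field contribution, each equal to $(N/2)^2$ times the corresponding term of $H_{[2\times2]}(\sigma_{\Lambda,[2\times2]})$, yields \eqref{eqprop3}.
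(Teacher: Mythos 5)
Your proposal is correct and follows essentially the same route as the paper: decompose the field sum and the bond sum over the $(N/2)^2$ disjoint translates of $\Lambda^{[2\times2]}$, use the $2\tb$-periodicity of both $\sigma_{\Lambda,N}$ and $h$, and identify the bonds straddling adjacent translates with the wrap-around bonds of the small torus $\TT^{[2\times2]}$ (the paper's equations \reff{4.66} and \reff{int.per2} implement exactly your interior/straddling split). The extra care you flag about the reflections $\theta_{Q_i}$ and the parities is subsumed in the paper by the periodicity statement \reff{periodicity2} and the explicit neighbour identifications given just before the lemma, so no further argument is needed.
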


\begin{proof}
Note that the number of sites in $\widetilde{\TT}^{[2\times 2]}_N$ is equal to $(N/2)^2$. Therefore, for the external field part in the Hamiltonian we have
\begin{equation}
\begin{array}{lll}
\label{ext.per}
 \displaystyle\sum_{s \in \mathbb{T}_N} h(s) \sigma_{{\Lambda},N}(s) &=&   \displaystyle\sum_{\mathbf{t} \in \wt\TT^{[2\times 2]}_N} \Bigl(  \displaystyle\sum_{s \in \Lambda^{[2\times 2]} + \mathbf{t}} h(s) \sigma_{{\Lambda},N}(s) \Bigr) \\
 &=& \displaystyle\left(\frac{N}{2}\right)^2  \displaystyle\sum_{s \in \mathbb{T}^{[2\times 2]} } h(s) \sigma_{{\Lambda},[2\times 2]}(s).
\end{array}
\end{equation}
For the interaction part   we can write 

{\small \begin{equation}\label{4.66}
\begin{array}{ll}
  \displaystyle\sum_{\langle t, s \rangle \in \mathbb{T}_N} \sigma(t) \sigma(s) = & {} \displaystyle\sum_{\mathbf{t} \in \widetilde\TT_N^{[2\times 2]}}\Bigl( \displaystyle\sum_{\langle t, s \rangle \in \Lambda^{[2\times 2]} + \mathbf{t}}  \sigma(t) \sigma(s)      +\\
 &     \displaystyle\sum_{\substack{\langle t, s \rangle:\:  t \in (\Lambda^{[2\times 2]} + \mathbf{t}), \\ s \in (\Lambda^{[2\times 2]} + \mathbf{t} + (2L_1,0)) }}  \sigma(t) \sigma(s)        +       \displaystyle\sum_{\substack{\langle t, s \rangle:\:  t \in (\Lambda^{[2\times 2]} + \mathbf{t}), \\ s \in (\Lambda^{[2\times 2]}+ \mathbf{t} + (0,2L_2)) }}   \sigma(t) \sigma(s)  \Bigr),
\end{array}
\end{equation}}
for any $\sigma \in \Omega_N$.
We apply this representation for configuration $\sigma_{\Lambda,N}$, and remembering the periodicity condition \eqref{periodicity2} we obtain
\begin{equation}
\label{int.per2}
\begin{array}{lll}
 \displaystyle\sum_{\langle t, s \rangle \in \mathbb{T}_N} \sigma_{{\Lambda},N}(t) \sigma_{{\Lambda},N}(s) & =  &  \displaystyle\sum_{\mathbf{t} \in \widetilde{\mathbb{T}}_N^{[2\times 2]} }  \displaystyle\sum_{\langle t, s \rangle \in \mathbb{T}^{[2\times 2]} }  \sigma_{{\Lambda},[2\times 2]}(t) \sigma_{{\Lambda},[2\times 2]}(s)\\[0.4cm]
 &=&   \displaystyle\left(\frac{N}{2}\right)^2  \displaystyle\sum_{\langle t, s \rangle \in \mathbb{T}^{[2\times 2]}}  \sigma_{{\Lambda},[2\times 2]}(t) \sigma_{{\Lambda},[2\times 2]}(s).
\end{array}
\end{equation}
\medskip
The relation \reff{eqprop3} follows from \eqref{ext.per} and \eqref{int.per2}.  \end{proof}

Since, the number of sites in $\widetilde{\mathbb{T}}_N$ is equal to $N^2$, using Lemma~\ref{lem.per} we obtain from \eqref{eqprop2}
\begin{equation}
\label{eqprop4}
\mathfrak{z}_{\beta,N} ( \mathcal{B} ( \sigma_{\Lambda}) ) \le \exp\Bigl( -\frac{\beta}{4} \left[H_{[2\times 2]}(\sigma_{\Lambda,[2\times2]}) - H_{[2\times 2]}(\sigma^+_{\Lambda,[2\times 2]})\right] \Bigr),
\end{equation}
for any $\sigma_{\Lambda} \in R(\Lambda)$, where $\sigma^+_{\Lambda,[2\times 2]}$ is $+1$ constant configuration on $\mathbb{T}^{[2\times 2]}$.

Finally the Proposition~\ref{prop1} readily follows from the next lemma. The proof of this lemma is essentially repeat the arguments of Lemma 3.3 from \cite{MPS}, but we provide the proof for completeness.

\medskip

\begin{lemma}
\label{lem.hb}
If $h$ satisfies \reff{Peierls2} then for all configuration $\sigma_{\Lambda} \in R(\Lambda)$, its extended configuration 
$\sigma_{\Lambda,[2\times 2]}$, defined by \reff{periodicity2}, on torus $\mathbb{T}^{[2\times 2]}$ satisfies
\begin{equation}
\label{ham.block1}
H_{[2\times 2]}(\sigma_{\Lambda,[2\times 2]}) - H_{[2\times 2]} (\sigma^+_{\Lambda,[2\times 2]}) \ge 4\left(2J - h \frac{L_1L_2}{L_1+L_2}\right).
\end{equation}
\end{lemma}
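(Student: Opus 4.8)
The plan is to reduce the lower bound on $H_{[2\times 2]}(\sigma_{\Lambda,[2\times 2]}) - H_{[2\times 2]}(\sigma^+_{\Lambda,[2\times 2]})$ to a counting of broken bonds and a bound on the external-field loss, exactly as in Lemma 3.3 of \cite{MPS}, keeping track of the factor $4$ coming from the four copies of $\wh\Lambda$ inside $\wh\Lambda^{[2\times 2]}$. Write $\sigma := \sigma_{\Lambda,[2\times 2]}$ for brevity. The difference of energies splits as $\Delta H = \Delta H_{\mathrm{int}} + \Delta H_{\mathrm{ext}}$, where $\Delta H_{\mathrm{int}} = 2J\,|\partial^{\sigma}|$ with $\partial^{\sigma}$ the set of bonds $\langle t,s\rangle$ in $\TT^{[2\times 2]}$ with $\sigma(t)\neq\sigma(s)$ (each such bond contributes $+2J$ relative to the all-plus configuration), and $\Delta H_{\mathrm{ext}} = 2\sum_{s:\,\sigma(s)=-1} h(s)$. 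Since $\sigma_\Lambda \in R(\Lambda)$ is non-constant on $\Lambda$, and by the periodicity \reff{periodicity2} together with the identifications of opposite sides of $\Lambda^{[2\times 2]}$ spelled out before Lemma~\ref{lem.per}, the minus-spins of $\sigma$ form a nonempty proper subset $V \subset \TT^{[2\times 2]}$, so $\partial^{\sigma} = \partial V \neq \emptyset$.

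Next I would bound the two contributions against each other. The field term is $\Delta H_{\mathrm{ext}} = 2h(|V\cap\mathbf Z_+| - |V\cap\mathbf Z_-|)$, and in the worst case this is as negative as $-2h\min(|V|, |V^c|)$; but more precisely, because each cell carries exactly $L_1L_2$ sites of a single sign and the block $\Lambda^{[2\times2]}$ contains equal numbers of $+$ and $-$ cells, the field gain from putting a region $V$ to $-1$ is controlled by how much of $V$ lies in the $+$-cells. The key geometric input — and this is the step I expect to be the main obstacle — is an isoperimetric-type inequality on $\TT^{[2\times2]}$ of the form $|\partial V| \ge c\,|V \cap \mathbf Z_+|$ (and symmetrically with $V^c$), with the constant chosen so that $2J|\partial V| - 2h|V\cap\mathbf Z_+| \ge (2J - h\frac{L_1L_2}{L_1+L_2})\cdot(\text{number of ``active'' }\Lambda\text{-copies})$. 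This is precisely the Peierls-constant computation of Theorem~\ref{pecherski}: a droplet of minus-spins occupying a full cell of $\mathbf Z_+$ (the extremal case) has perimeter $2(L_1+L_2)$ and field gain $2hL_1L_2$, giving net cost $2J\cdot 2(L_1+L_2) - 2hL_1L_2 = 2(L_1+L_2)\bigl(2J - h\frac{L_1L_2}{L_1+L_2}\bigr)$, i.e. cost $2J - h\frac{L_1L_2}{L_1+L_2}$ per unit of half-perimeter. One shows that every non-constant configuration does at least as well as this extremal droplet, per copy of $\wh\Lambda$.

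Finally, to get the factor $4$ I would argue that since $\sigma_\Lambda$ is non-constant on $\Lambda$, each of the four translated copies $\wh\Lambda,\ \wh\Lambda+(0,L_2),\ \wh\Lambda+(L_1,0),\ \wh\Lambda+(L_1,L_2)$ making up $\wh\Lambda^{[2\times2]}$ carries a non-constant pattern (they are reflected/translated copies of $\sigma_\Lambda$ under the propagation $\pi_\tb$), so the per-copy cost $2J - h\frac{L_1L_2}{L_1+L_2}$ is incurred at least once in each copy, and these contributions are over disjoint bond- and site-sets except for the shared boundary bonds, which only help. Summing the four copies yields \reff{ham.block1}. I would present the geometric/isoperimetric step carefully (adapting \cite{MPS}, Lemma 3.3) and treat the bookkeeping of shared boundaries and the $[2\times2]$ periodic identifications as the routine—but error-prone—part; the genuine content is the per-cell Peierls estimate already encoded in Theorem~\ref{pecherski}.
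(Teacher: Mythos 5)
Your reduction of the energy difference to $2J|\partial V| + 2\sum_{s\in V}h(s)$, with $V$ the set of minus spins of $\sigma_{\Lambda,[2\times2]}$, is exactly the paper's starting point, and your observation that each of the four reflected copies of $\sigma_\Lambda$ inside $\wh\Lambda^{[2\times2]}$ is non-constant (hence $|\partial V|\ge 4$) is a legitimate way to secure the factor $4$. But the heart of the lemma --- the bound $2\sum_{s\in V}h(s)\ge -h\tfrac{L_1L_2}{L_1+L_2}\,|\partial V|$ --- is precisely the step you defer (``the step I expect to be the main obstacle'', ``one shows that every non-constant configuration does at least as well as the extremal droplet''), and the substitute you propose is not correct as stated. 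An isoperimetric inequality of the form $|\partial V|\ge c\,|V\cap\mathbf Z_+|$ cannot hold with the constant you need: take $V$ to contain entire cells of both signs, so that $|V\cap\mathbf Z_\pm|$ grows like the area of $V$ while $|\partial V|$ grows only like its perimeter. (Also, the energetically dangerous part of $V$ is $V\cap\mathbf Z_-$, not $V\cap\mathbf Z_+$.) What must be controlled is the \emph{imbalance} $\bigl|\,|V\cap\mathbf Z_+|-|V\cap\mathbf Z_-|\,\bigr|$, with the cancellation between the two signs exploited, and the control has to be split by lattice direction. Likewise, your ``per-copy'' accounting does not obviously localize: a minus droplet can straddle several copies of $\wh\Lambda$, so the field sum does not decompose over copies in the way your last paragraph assumes.

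The paper's proof supplies exactly the missing step. One decomposes $V$ into maximal connected horizontal segments $\mathcal S_V$ and, separately, into maximal vertical segments $\mathcal T_V$. A segment that wraps around $\TT^{[2\times2]}$ meets equally many sites of $\mathbf Z_+$ and $\mathbf Z_-$ and has zero net field; an open segment has net field at most $hL_1$ (horizontal) or $hL_2$ (vertical) in absolute value, because the sign of $h$ alternates every $L_i$ steps along the segment, and each open segment terminates in two boundary edges of the corresponding orientation, so $2|\mathcal S_V^{op}|\le|\partial^h V|$ and $2|\mathcal T_V^{op}|\le|\partial^v V|$. This yields the two bounds $2\sum_{s\in V}h(s)\ge -hL_1|\partial^h V|$ and $2\sum_{s\in V}h(s)\ge -hL_2|\partial^v V|$, and the convex combination with weights $L_2/(L_1+L_2)$ and $L_1/(L_1+L_2)$ gives $2\sum_{s\in V}h(s)\ge -h\tfrac{L_1L_2}{L_1+L_2}|\partial V|$; together with $|\partial V|\ge 4$ this is \reff{ham.block1}. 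Without this segment decomposition (or an equivalent argument bounding the field imbalance by the contour length), your proposal does not prove the lemma.
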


\begin{proof} 
The proof is taken from \cite{MPS}. For brevity of notations, proving \reff{ham.block1} we omit the indices $\Lambda,[2\times 2]$ everywhere for the configurations from  $\mathbb{T}^{[2\times 2]}$. Let for  $V \subset \mathbb{T}^{[2\times2]}$ the configuration $\sigma^V$ be the perturbation of $\sigma$, that is
\begin{equation}
\label{perturbation2}
\sigma^V(t)=
\left\{
\begin{array}{rl}
-\sigma(t),&\text{ if }t\in V,\\[0.2cm]
\sigma(t),&\text{ if }t\notin V.
\end{array}
\right.
\end{equation}
Let denote $\sigma^{V,+}$ the perturbation of $\sigma^+$. 

We prove the inequality \eqref{ham.block1} for all $V\neq \emptyset$ and $V\neq  \mathbb{T}^{[2\times 2]}$. Since $V\subset\mathbb{T}^{[2\times 2]}$ there are the edges $\langle u,v\rangle\subset\mathbb{T}^{[2\times 2]}$ such that $u\in V,v\in V^c$, thus $\sigma^{V,+}(u)=-1,\:\sigma^{V,+}(v)=+1$. The set of those edges composes {Peierls contour} (or simply, a contour) $\partial V$ of $V$.  Peierls contour is union of the horizontal edges $\partial^h V$ and the vertical edges $\partial^v V$, i.e. $\partial V=\partial^h V\cup \partial^v V$. Note that,
\begin{equation}
\label{ham.vert}
H_{[2\times 2]}(\sigma^{V,+}) - H_{[2\times 2]} (\sigma^+) = 2 J |\partial V | + 2 \displaystyle\sum_{s \in V} h(s).
\end{equation}
We represent $V$ as union of   sets $\mathcal S_V=\{S\}$ of connected horizontal lines of the sites or as union of sets $\mathcal T_V=\{T\}$ of connected vertical lines of the sites. Then
\begin{equation}
\label{ham.vert2}
\displaystyle\sum_{s \in V} h(s) =   \displaystyle\sum_{S \in \mathcal{S}_V}\sum_{s \in S} h(s)=  \displaystyle\sum_{T \in \mathcal{T}_V}\sum_{s \in T} h(s).
\end{equation}

 There are two types of the connected lines of the sites in $V$: \textit{closed} or \textit{open}. In the last case the line has two ends which belong to the boundary edges of $\partial V$. If the open line is horizontal then there exist two edges of $\partial^h V$ having common points with the line. The same property is true for the vertical open lines intersecting $\partial^v V$. The closed lines do not intersect the boundary $\partial V$. Their are closed circles around the torus $\TT^{[2\times 2]}$.

Let $\mathcal{S}_V=\mathcal{S}_V^{cl}\cup\mathcal{S}_V^{op}$, where $\mathcal{S}_V^{cl}$ is the subset of  the closed horizontal lines and $\mathcal{S}_V^{op}$ is the subset of  the open horizontal lines. The similar representation holds for the vertical lines, $\mathcal{T}_V=\mathcal{T}_V^{cl}\cup\mathcal{T}_V^{op}$. For the closed lines
\begin{equation}
\label{hamclosed}
\sum_{s \in S} h(s)   =  \sum_{s \in T} h(s) =  0,
\end{equation}
where $S\in\mathcal{S}_V^{cl}$ and $T\in\mathcal{T}_V^{cl}$. For the open lines the following estimates hold
\begin{equation}
\label{ham.vert3}
\begin{array}{lll}
\Bigl| \displaystyle\sum_{s \in S} h(s) \Bigr| & = &  h \bigl| | S\cap \mathbf{Z}_+| -  | S\cap \mathbf{Z}_-| \bigr| \le hL_1,\\[0.3cm]
\Bigl| \displaystyle\sum_{s \in T} h(s) \Bigr| & = & h \bigl| | T\cap \mathbf{Z}_+| -  | T\cap \mathbf{Z}_-| \bigr| \le hL_2,
\end{array}
\end{equation}
where $S\in\mathcal{S}_V^{op}$ and $T\in\mathcal{T}_V^{op}$. This implies the following inequalities
\begin{equation}
\label{ham.vert4}
\begin{array}{l}
2 \displaystyle\sum_{S \in \mathcal{S}_V}\sum_{s \in S} h(s)  \ge  -2 h L_1 |\mathcal{S}_V| \ge -hL_1 |\partial^h V|,\\[0.3cm]
2 \displaystyle\sum_{T \in \mathcal{T}_V}\sum_{s \in T} h(s)  \ge  -2 h L_2 |\mathcal{T}_V| \ge -hL_2 |\partial^v V|.
\end{array}
\end{equation}
Finally, by \reff{ham.vert}, \reff{ham.vert2} and \reff{ham.vert4},
{\small \begin{equation*}
\label{ham.vert5}
\begin{array}{lll}
H_{[2\times 2]}(\sigma^{V,+}) - H_{[2\times 2]} (\sigma^+) &=& 2 J |\partial V | + \displaystyle\frac{L_1}{L_1+L_2} \Bigl(2 \displaystyle\sum_{s \in V} h(s) \Bigr)+ \frac{L_2}{L_1+L_2} \Bigl(2 \displaystyle\sum_{s \in V} h(s) \Bigr)\\[0.3cm]
& \ge & 2 J |\partial V | -h \displaystyle\frac{L_1L_2}{L_1+L_2} |\partial^h V| -h \displaystyle\frac{L_1L_2}{L_1+L_2} |\partial^v V|\\[0.3cm]
& = & \displaystyle \Bigl(2J - h \frac{L_1L_2}{L_1+L_2}\Bigr) |\partial V|.
\end{array}
\end{equation*}}
Clearly for each $V\subset \mathbb{T}^{[2]}$, $|\partial V| \ge 4$. We have proved \reff{ham.block1}.  \end{proof}

Finally, applying this lemma to \eqref{eqprop4}, and by sub-additivity of $\mathfrak{z}_{\beta, N}$ (see (\ref{subz})), we have
\begin{equation}
\label{z.add.lem3}
\begin{array}{lll}
\mathfrak{z}_{\beta, N}(\mathcal{R}) & \le & \displaystyle\sum_{\sigma_{\Lambda} \in R(\Lambda)} \mathfrak{z}_{\beta, N}( \mathcal{B}(\sigma_{\Lambda})  ) \\
& \le & \displaystyle\sum_{\sigma_{\Lambda} \in R(\Lambda)} \exp \left( -\beta \Bigl(2J - h\frac{L_1L_2}{L_1+L_2} \Bigr) \right)\\
&= &|R(\Lambda)| \displaystyle\exp \left( -\beta \Bigl(2J - h\frac{L_1L_2}{L_1+L_2} \Bigr) \right).
\end{array}
\end{equation}
The number $|R(\Lambda)|$ of the bad configurations on $\Lambda$ is estimated as $|R(\Lambda)|\leq 2^{B_1B_2}$. It proves \reff{4.28}.

The inequality \reff{4.30} we obtain by the same way as \reff{4.28} was proved, using the estimate $|R(\Lambda^*)|\leq 4^{B_1B_2}$.          \qed

\subsubsection{Proof of Proposition~\ref{twopoints}.}
Denote $\Omega_N^{s,t}= \{ \sigma \in \Omega_N: \sigma(s)=+1 \text{ and } \sigma(t)=-1\}$. Then for each $\sigma \in \Omega_N^{s,t}$ we define the set
\begin{equation}
\label{pluscluster}
I^+(\sigma) = \left\{ u \in \mathbb{T}_N : \sigma(u) = +1\right\},
\end{equation}
and let $I^+(\sigma,s) \subseteq I^+(\sigma)$ be its maximal connected component containing the site $s$.  The sites $s$ and $t$ are separated by some Peierls contour $\gamma(\sigma)$.  As above  Peierls contour is  the set of edges $\{\langle u,v\rangle\}$ such that $\sigma(u)\ne\sigma(v)$. 

We define a dual to $\gamma(\sigma)$ contour $\gamma^*(\sigma)$. To this end we consider the dual lattice $\Z^{*2}$ and a dual graph $\mathbb G^*=(\Z^{*2},\mathbb E^*)$ . The edges $\langle u^*,v^*\rangle\in\mathbb E^*$ are orthogonal to the edges from $\mathbb E$ (see \reff{graphZ2}). The dual to $\gamma(\sigma)$ contour $\gamma^*(\sigma)=\langle u^*,v^*\rangle$ consists of the all dual edges which are orthogonal to the edges from $\gamma(\sigma)$.

More formal description is as following. An edge $\langle u^*,v^*\rangle$  with $u^*=(u_1^*,u_2^*)$ and $v^*=(v_1^*,v_2^*)$ dual to $\langle u,v\rangle$  with $u=(u_1,u_2)$ and $v=(v_1,v_2)$, is defined by
\begin{equation*}
\begin{array}{ll}
u_1^*=&\frac{u_1+v_1}{2}|u_1-v_1|+(u_1+\frac12)|u_2-v_2|,\\
u_2^*=&\frac{u_2+v_2}{2}|u_2-v_2|+(u_2+\frac12)|u_1-v_1|,\\
v_1^*=&\frac{u_1+v_1}{2}|u_1-v_1|+(v_1-\frac12)|u_2-v_2|,\\
v_2^*=&\frac{u_2+v_2}{2}|u_2-v_2|+(v_2-\frac12)|u_1-v_1|.
\end{array}
\end{equation*}
Let $\gamma^{ext}(\sigma,s)\subseteq\gamma^{*}(\sigma)$ be such that any point $r\in\gamma^{ext}(\sigma,s)$ can be connected with the site $t$ by a line in $\wh\TT_N$ avoiding $\wh I^+(\sigma,s)$ (see \reff{torocont}). Let $\wh J^+(\sigma,s)\subset\wh\TT_N $ contain $s$ and only its boundary is $\gamma^{ext}(\sigma,s)$. It is clear that $\wh J^+(\sigma,s)\supset \wh I^+(\sigma,s)$. The contour $\gamma^{ext}(\sigma,s)$ is called an external contour relatively to the site $s$.

Denote $\Gamma_{s,t}=\{\gamma^{ext}(\sigma,s):\:\sigma\in\Omega_N^{s,t}\}$ the set of all external contours. 

In what follows $\Psi$ means either the block or the double-block. Recall that $\Psi$ is  a subgraph of \reff{graphZ2} embedded in $\R^2$. Further, for a given $L_1$ and $L_2$ we define a set $\mathfrak P=\{\Psi\}$ of permissible blocks which will participate in definition of a thick block-contour. First, all $\Lambda$-blocks belong to the set $\mathfrak P$. Second, as defined in \reff{alllambdas}, if $L_1$ is even, then all the horizontal $\Lambda_{h,1}^*$- and $\Lambda_{h,2}^*$-blocks belong to $\mathfrak P$; and if $L_2$ is even, then all vertical $\Lambda_{v,1}^*$- and $\Lambda_{v,2}^*$-blocks belong to $\mathfrak P$ as well.

Note that if $L_1$ and $L_2$ are odds, then the set  $\mathfrak P$ consists on only $\Lambda$-blocks. Note also that any double block $\Psi$ from $\mathfrak P$ consists of two disjoint $\Lambda$-blocks, say $\Lambda'$ and $\Lambda''$. We associate any such double block with a set $\Phi_\Psi \subset\Psi$ of edges connecting $\Lambda'$ and $\Lambda''$: $\Phi_\Psi = \{ \langle u, v \rangle: \ u\in \Lambda' \mbox{ and } v \in \Lambda'' \}$.

We say that a $\Lambda$-block $\Psi$ from $\mathfrak P$ intersects a contour $\gamma^{ext}$, $\Psi\cap\gamma^{ext}\ne\emptyset$, if there exists an edge $\langle u, v \rangle, u,v \in \Psi$ such that $\langle u^*, v^* \rangle \subset \gamma^{ext}$.
We say that a $\Lambda^*$-block $\Psi$ from $\mathfrak P$ (if it exists) intersects a contour $\gamma^{ext}$, $\Psi\cap\gamma^{ext}\ne\emptyset$, if 
$\gamma^{ext}$ intersects all edges from $\Phi_\Psi$.

For any $\gamma^{ext} \in \Gamma_{s,t}$ we define $\mathcal E=\mathcal E(\gamma^{ext})=\left\{\Psi:\:\Psi\cap\gamma^{ext}\ne\emptyset\right\}$. The set $\mathcal E=\mathcal E(\gamma^{ext})$ is called a \textit{thick external contour} corresponding Peierls€™ external contour  $\gamma^{ext}$. Let $\mathcal E_b\subseteq\mathcal E$ be the subset of the $\Lambda^*$-blocks, and let $\mathcal E_0=\mathcal E\setminus\mathcal E_b$ the set of $\Lambda$-blocks.

Denote $\Omega^*(\mathcal E)$, the set of configurations $\sigma$, that generate the thick external contour $\mathcal E$. If $\sigma\in \Omega^*(\mathcal E)$ then $\sigma\in \bigcap_{\Psi\in\mathcal E_0}\mathcal B(\sigma(\Psi) )\cap \bigcap_{\Psi^*\in\mathcal E_b}\mathcal{B}(\sigma(\Psi^*))$. Moreover $\sigma\in \bigcap_{\Psi\in\mathcal E_0}\mathcal R_\Psi\cap\bigcap_{\Psi^*\in\mathcal E_b}\mathcal R_{\Psi^*}$, that is 
\begin{equation}
\label{inclusao}
\Omega^*(\mathcal E)\subset \bigcap_{\Psi\in\mathcal E}\mathcal R_\Psi.
\end{equation}
Next we estimate $\mu_{\beta,N}(\Omega^*(\mathcal E))$. First, we separate the set of $\Lambda^*$-blocks in all the possible $\Lambda_k^*$-blocks, $k\in\mathcal{D}$. Let $\mathcal E_b = \mathcal E_{h,1} \cup  \mathcal E_{h,2}  \cup  \mathcal E_{v,1} \cup  \mathcal E_{v,2}$, the subsets of $\Lambda^*$-blocks defined on the subgroups \reff{alltori}. Using \reff{inclusao}, the Cauchy-Schwarz inequality, and the chessboard estimates \reff{chessboard1} and \reff{chessboard1_1}, respectively, we obtain

{\small \begin{equation}\label{4.84}
\begin{array}{ll}
\mu_{\beta,N}(\Omega^*(\mathcal E))\leq\mu_{\beta,N}(\bigcap_{\Psi\in\mathcal E}\mathcal R_\Psi)\leq\sqrt{\mu_{\beta,N}\left(\bigcap_{\Psi\in\mathcal E_0}\mathcal R_\Psi\right)}{\displaystyle\prod_{k \in\mathcal D}} \sqrt[8]{\mu_{\beta,N}\left(\bigcap_{\Psi^*\in\mathcal E_{k}}\mathcal R_{\Psi^*}\right)}\\
\leq\prod_{\Psi\in\mathcal E_0} \Bigl[\mu_{\beta,N} \Bigl( \bigcap_{\tb\in\wt\TT_N}\pi_\tb(\mathcal R_\Psi)\Bigr) \Bigr]^{\frac1{2N^2}}
{\displaystyle\prod_{k \in\mathcal D}}\prod_{\Psi^*\in\mathcal E_{k}}\Bigl[\mu_{\beta,N}\Bigl(\bigcap_{\tb\in\wt\TT_N^{(k)}}\pi_\tb(\mathcal R_{\Psi^*})\Bigr)\Bigr]^{\frac1{4N^2}}.
 \end{array}
\end{equation}}

Recall that $N^2$ is the number of the $\Lambda$-blocks and $N^2/2$ is the number of the $\Lambda^*$-blocks. The probabilities $\mu_{\beta,N}\bigl(\bigcap_{\tb\in\wt\TT_N}\pi_\tb(\mathcal R_\Psi)\bigr)$ when $\Psi\in\mathcal E_0$, and $\mu_{\beta,N}\bigl(\bigcap_{\tb\in\wt\TT_N^{(k)}}\pi_\tb(\mathcal R_{\Psi^*})\bigr)$ when $\Psi^*\in\mathcal E_b$ do not depend on the position of $\Psi$ ($\Psi^*$) thus we introduce a magnitude $\mu_{\beta,N}\bigl(\bigcap_{\tb\in\wt\TT_N}\pi_\tb(\mathcal R)\bigr)$ meaning the probability of any propagated block. Moreover, we define the magnitude $\mu_{\beta,N}\bigl(\bigcap_{\tb\in\wt\TT_N^{(k)}}\pi_\tb(\mathcal R^*)\bigr)$ meaning the probability of any propagated double-block. Using \reff{4.84} we obtain

\begin{equation*}
\mu_{\beta,N}(\Omega^*(\mathcal E))\leq\Bigl[\mu_{\beta,N}\Bigl(\bigcap_{\tb\in\wt\TT_N}\pi_\tb(\mathcal R)\Bigr)\Bigr]^{\frac{|\mathcal E_0|}{2N^2}}{\displaystyle\prod_{k \in\mathcal D}} \Bigl[\mu_{\beta,N}\Bigl(\bigcap_{\tb\in\wt\TT_N^{(k)}}\pi_\tb(\mathcal R^*)\Bigr)\Bigr]^{\frac{|\mathcal E_k |}{8(N^2/2)}}.
\end{equation*}
Using notation \reff{zeta} and \reff{zeta1_1}, and applying \reff{4.28} and \reff{4.30}, we obtain
\begin{equation}\label{4.85}
\begin{array}{lll}
\mu_{\beta,N}(\Omega^*(\mathcal E))&\leq&\mathfrak z_{\beta,N}(\mathcal R)^{\frac{|\mathcal E_0|}{2}}\prod_{k \in\mathcal D}\mathfrak z_{\beta,N}(\mathcal R^*)^{\frac{|\mathcal E_k |}{8}}\\[0.3cm]
&\leq& 2^{B_1B_2\frac{|\mathcal E_0|}{2}} 4^{B_1B_2\frac{|\mathcal E_b|}{8}}\exp\left\{-\frac{\beta}{8}|\mathcal E|\left(2J-\frac{hL_1L_2}{L_1+L_2}\right)\right\}.
\end{array}
\end{equation}

The next step is defining $\mathfrak D_{s,t}=\{\mathcal E=\mathcal E(\gamma^{ext}), \gamma^{s,t}\in\Gamma_{s,t}\}$.
Next, we estimate $\mu_{\beta,N}(\Omega_N^{s,t})$ using the inclusion $\Omega_N^{s,t}\subset\bigcup_{\mathcal E\in\mathfrak D_{s,t}}\Omega^*(\mathcal E)$. Considering the inequality \reff{4.85}, and a combinatorial argument, we obtain

{\small \begin{equation}\label{4.86}
\begin{array}{ll}
\displaystyle\mu_{\beta,N}(\Omega_N^{s,t})\leq\sum_{\mathcal E\in\mathfrak D_{s,t}}\mu_{\beta,N}(\Omega^*(\mathcal E))\leq \sum_{n\geq1}\sum_{\stackrel{n_0,n_b:}{n_0+n_b=n}}c^{n_0}c^{2n_b}(2^{n_0}\sqrt{2}^{n_b})^{B_1B_2/2}e^{-\beta n\alpha/8}\\
= \displaystyle\sum_{n\geq1}\left(c2^{B_1B_2/2}+c^24^{B_1B_2/8}\right)^ne^{-\beta n\alpha/8} \leq \displaystyle\sum_{n\geq1}\bigl(c(c+1)2^{B_1B_2/2}e^{-\beta\alpha/8}\bigr)^n,
\end{array}
\end{equation}}
where $\alpha=2J-\frac{hL_1L_2}{L_1+L_2}$, $n_0=|\mathcal E_0|$, $n_b=|\mathcal E_b|$, and $c$ is a combinatorial constant related to the number of the thick contours. The way the constant $c$ appears in \reff{4.86} defined by our calculations the number of the thick contour having its length equal to $n$ (see a justification  below). If \reff{betaP3} holds true, then,
\begin{equation}
\mu_{\beta,N}(\Omega_N^{s,t})\leq 2c(c+1)2^{B_1B_2 /2}\exp\left\{-\frac{\beta}{8} \left(2J - \frac{hL_1L_2}{L_1+L_2}\right)\right\}.
\end{equation}
\qed

\medskip

\textbf{Justification of \reff{4.86}}. The double-blocks create two neighboring vertices in the graph. Therefore the double-block is taking in account twice in \reff{4.86}. Thus the contribution of the double-block energy is estimated as $c^{2n_b}4^{n_bB_1B_2}\exp\{-\beta\alpha n_b\}$. 
\qed


\section{2D ferromagnetic Ising model with alternating strips external field}
\label{sec:nardi}

In \cite{NOZ} it was studied a phase diagram of the 2D Ising model with alternating external field on 1D sublattices. In this section we prove the result of \cite{NOZ} about coexistence of two phases by using RP in a way similar to the considerations in previous sections. In fact we prove a more general result of the coexistence, including the coexistence result of \cite{NOZ}.
The model in \cite{NOZ} is as follows. The external field is

\begin{equation}
\label{ext.nar}
h(s_1, s_2)=
\left\{
\begin{array}{rl}
h,&\text{ if }s_2 \text{ is even},\\
-h,&\text{ if }s_2 \text{ is odd},
\end{array}
\right.
\end{equation}
where $h >0$, and the Hamiltonian is defined by \eqref{ham.chess}.

Thus this model is an ``extreme" case of the cell-board model. Indeed, the external field in \reff{ext.nar} can be obtained letting $L_1=\infty$ and $L_2=1$.
In \cite{NOZ}, it is proved that a phase transition in this model holds true for $\beta$ sufficiently large and $h< 2J - k e^{-\beta J}$, where $k$ being a suitable positive constant.
We propose a more general model, with $L_2 \ge 1$, for which we use the reflection positivity techniques to prove phase transition. We called this model, {\it the Ising model with alternating strips external field}.

Formally, consider $L \in \mathbb{N}$, for each integer $n$, we define a {\it strip} of size $L$ by

\begin{equation}
\label{faixaL}
F(n) =  \{ (t_1,t_2) \in \Z^2 :  nL \le t_2 < (n+1)L \}.
\end{equation}

Note that $\mathbb Z^ 2 = \cup_{n\in\mathbb Z} F(n)$. In a similar way to \reff{Zplus}, we think the strips being colored black or white. See Figure~\ref{blocknardi}, where $L=2$. Then, we define the subset $\mathbf{Z}^*_+$ and $\mathbf{Z}^*_-$ of $\mathbb{Z}^2$.

\begin{equation}
\begin{array}{llr}
\label{ZplusL}
\mathbf{Z}^*_+ = \displaystyle\bigcup_{\substack{n:\\ n \text{ is even}}} F(n), & & \hfill \mathbf{Z}^*_- = \Z^2 \setminus \mathbf{Z}^*_+.
\end{array}
\end{equation}

Let $\Omega = \{-1,+1\}^{\Z^2}$ the set of all configurations on $\Z^2$. The formal Hamiltonian, for the Ising model with alternating strips, is defined by \reff{ham.chess}, and the external field is given by
\begin{equation}
\label{extfaixas}
h(s)=
\left\{
\begin{array}{rl}
h,&\text{ if }s\in \mathbf{Z}^*_+,\\
-h,&\text{ if }s\in \mathbf{Z}^*_-,
\end{array}
\right.
\end{equation}
where $h>0$. For this model, we obtain the following result.

\begin{theorem}
\label{ournardi}

Consider the Ising model defined by the Hamiltonian \reff{ham.chess}, with external field given by \reff{extfaixas} and \reff{ZplusL}. If $h < 2J/L$, then there exists a suitable positive constant $k = k(L)$, such that for any $\beta > k / ( 2J-hL)$, there exist two distinct measures $\mu^+_{\beta}$ and $\mu^-_{\beta} \in \mathcal{G}_{\beta}$, which satisfy
\begin{equation}
\label{Ntheorem1}
\begin{array}{l}
\mu^{\pm}_{\beta} (\sigma(t) = \pm 1 ) > \frac{1}{2}.
\end{array}
\end{equation}\end{theorem}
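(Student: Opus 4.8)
The plan is to run, with the changes forced by the geometry, the reflection-positivity argument behind Theorem~\ref{phase transition}, exploiting that the field \reff{extfaixas} depends only on the vertical coordinate~$t_2$. I would place the model on the torus $\TT_N=\Z^2/[(N\Z)\times(NL\Z)]$, $N$ even, with periodic Hamiltonian and Gibbs measure as in \reff{torusham}--\reff{Gibbs2}. Since $h(s)$ is independent of $t_1$, \emph{every} vertical line $\{t_1=n\}$ or $\{t_1=n+\tfrac12\}$ is field preserving, while the field-preserving horizontal lines are exactly $\{t_2=nL+(L-1)/2\}$ (through sites if $L$ is odd, through bonds if $L$ is even). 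Accordingly I would take as fundamental block $\Lambda$ a single column of height $B_2$ ($B_2=L$ for $L$ even, $B_2=L+1$ for $L$ odd, cf.\ \reff{blockside}); its translates tile $\TT_N$ and the reflections above exchange neighbouring blocks. As in Section~\ref{sec:proof} one needs the horizontal double-blocks $\Lambda^{*}_{h,1}=\Lambda\cup(\Lambda+(1,0))$, $\Lambda^{*}_{h,2}=\Lambda\cup(\Lambda-(1,0))$ (neighbouring columns being disjoint) and, when $L$ is even, the vertical double-blocks $\Lambda^{*}_{v,1}=\Lambda\cup(\Lambda+(0,L))$, $\Lambda^{*}_{v,2}=\Lambda\cup(\Lambda-(0,L))$; the propagation operators and quotient subgroups are the evident analogues of \reff{propag}, \reff{alllambdas}, \reff{alltori}. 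The analogue of Proposition~\ref{ourisRP} then follows as before --- writing $-H_N$ in the form \reff{criterion} across any such line uses only $h(\vartheta_P(s))=h(s)$ --- so $\mu_{\beta,N}$ is RP with respect to all reflections between neighbouring ($\Lambda$- or $\Lambda^{*}$-) blocks, the chessboard estimate (Theorem~\ref{chess-board}) applies, and the quantities $\mathfrak{z}_{\beta,N}$, $\mathfrak{z}^{(k)}_{\beta,N}$ of \reff{zeta}, \reff{zeta1_1} are available and sub-additive.

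The genuinely new ingredient is the Peierls bound for bad blocks, the analogue of Lemma~\ref{lem.hb}. For $\sigma_\Lambda\in R(\Lambda)$ and its periodic propagation $\sigma_{\Lambda,N}$ I would reduce, as in Lemma~\ref{lem.per}, the energy difference $H_N(\sigma_{\Lambda,N})-H_N(\sigma^{+})$, up to a volume factor, to the corresponding difference on a small torus on which the propagated configuration and $h$ are periodic with vertical circumference a positive multiple of $2L$. There, for a non-empty proper flip-set $V$ and the perturbation $\sigma^{V,+}$ of the all-plus configuration (as in the proof of Lemma~\ref{lem.hb}), one has $H(\sigma^{V,+})-H(\sigma^{+})=2J|\partial V|+2\sum_{s\in V}h(s)$; I would decompose $V$ into maximal \emph{vertical} segments of sites --- the direction along which $h$ alternates --- observing that a closed segment sweeps a whole number of strips and hence contributes $0$ to $\sum_{s\in V}h(s)$, whereas an open segment $T$ carries two vertical boundary edges and satisfies $|\sum_{s\in T}h(s)|\le hL$ (partial sums of alternating $\pm h$ runs of length $L$). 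Therefore $2\sum_{s\in V}h(s)\ge -hL\,|\partial^{v}V|\ge -hL\,|\partial V|$ and
\[
H(\sigma^{V,+})-H(\sigma^{+})\ \ge\ (2J-hL)\,|\partial V|\ \ge\ 4\,(2J-hL),
\]
strictly positive precisely when $h<2J/L$; this also makes $\sigma^{\pm}$ the relevant ground states and $2J-hL$ (the $L_1\to\infty$ limit of the constant in Theorem~\ref{pecherski}) the Peierls constant. Inserting this into the analogue of \reff{eqprop2}--\reff{eqprop4} gives $\mathfrak{z}_{\beta,N}(\mathcal R_{\Lambda})\le 2^{B_2}e^{-c_1\beta(2J-hL)}$ and $\mathfrak{z}^{(k)}_{\beta,N}(\mathcal R_{\Lambda^{*}_{k}})\le 4^{B_2}e^{-c_1\beta(2J-hL)}$ for some $c_1>0$, i.e.\ the analogue of Proposition~\ref{prop1}.

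From here the analogue of Proposition~\ref{twopoints} is a transcription of Section~\ref{sec:proof}: $\{\sigma(s)=+1,\sigma(t)=-1\}$ forces a Peierls contour separating $s$ from~$t$; one assembles the thick external contour from $\Lambda$- and $\Lambda^{*}$-blocks, bounds the probability of each thick contour of length $n$ using the chessboard estimate and the $\mathfrak{z}$-bounds above, and sums the resulting geometric-type series via the combinatorial bound $c$ on the number of contours of a given size, obtaining $\mu_{\beta,N}(\sigma(s)=+1,\sigma(t)=-1)\le C(L)\,e^{-c_2\beta(2J-hL)}$. Finally, the torus symmetry ``horizontal reflection followed by global spin flip'' yields, as in \reff{prova1a}, $\mu_{\beta,N}(\sigma(\Lambda_s)\equiv+1)=\mu_{\beta,N}(\sigma(\Lambda_s)\equiv-1)=\tfrac12\bigl(1-\mu_{\beta,N}(\mathcal R_{\Lambda_s})\bigr)$; choosing $t$ far from $s$, conditioning on $\sigma(t)=\pm1$, and passing to subsequential limits as $N\to\infty$ gives two Gibbs measures $\mu^{\pm}_\beta\in\mathcal G_\beta$ with $\mu^{\pm}_\beta(\sigma(t)=\pm1)>\tfrac12$, once $\beta>k/(2J-hL)$ with $k=k(L)$ absorbing $B_2$ and $c$. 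I expect the two steps that differ from the cell-board case to be the only real obstacles: arranging the block/double-block geometry so that the per-block entropy factor stays $2^{B_2}$ instead of growing with the torus (this is exactly where the horizontal translation invariance is used, through width-one blocks), and the Peierls estimate, which must be run along the vertical direction, dual to Lemma~\ref{lem.hb}.
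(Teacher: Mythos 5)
Your proposal is correct and follows essentially the same route as the paper: the same torus of size $N\times NL$, the same field-preserving reflection lines, the same reduction to a small periodic torus, the same vertical-segment Peierls estimate giving $H(\sigma^{V,+})-H(\sigma^{+})\ge(2J-hL)|\partial V|\ge 4(2J-hL)$, and the same chessboard/thick-contour and reflection-plus-spin-flip symmetry arguments. The only deviation is cosmetic: you take width-one column blocks (via through-bond vertical reflections at half-integer lines), whereas the paper uses through-site vertical reflections and hence $2\times L'$ blocks with entropy factor $2^{2L'}$ in place of your $2^{B_2}$ --- this affects only the value of the constant $k(L)$.
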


\begin{proof}
We follow ideas of section \ref{sec:RP}. We remark that in this section, we shall use the same notation, without distinction with previous model.

First, we construct a torus $\mathbb{T}_N$ by taking a subset $T_N$ of $\mathbb Z^2$ of size $N \times N L$:  $$T_N=\{ t=(t_1,t_2)\in \mathbb Z^2: \ 0\le t_1 < N, 0\le t_2 < NL\},$$

\noindent where $N$ is multiple of 4. Thus, $\mathbb T_N$ is the factor-group $\mathbb Z / (N \mathbb Z) \times \mathbb Z / (NL \mathbb Z)$. We consider the corresponding Hamiltonian with periodical boundary condition, as defined in \eqref{torusham}. 

Similar to \reff{planes}, we define the sets of planes $\mathcal P= \mathcal P_1 \cup \mathcal P_2$, given by

\begin{equation}
\label{planes2}
\begin{array}{c}
P_1^{(n)} = \{ t=(t_1,t_2)\in \mathbb R^2: \ t_1 = n\},\\[0.3cm]
P_2^{(n)} = \{ t=(t_1,t_2)\in \mathbb R^2: \ t_2 = nL + (L-1)/2\},
\end{array}
\end{equation}
where $n<N$ a positive integer, and $\mathcal P_1=\{P_1^{(n)}\} $, and $\mathcal P_2=\{P_2^{(n)}\} $.

The set of planes $ \mathcal{P}$, decompose the torus $ \mathbb{T}_N$ in rectangular blocks (see Figure~\ref{blocknardi}). It is easy to check \reff{criterion} for all planes $P \in \mathcal{P}$, using same functions as in the proof of Proposition \ref{ourisRP}. Once the PR property is guaranteed, we apply the chessboard estimates (see \reff{chessboard1}). For this reason, we define the $ \Lambda$-blocks by translations of the block $\Lambda$. We construct $ \Lambda $, considering on $\mathbb R^2$ the rectangle $$\tilde\Lambda = 
\Bigl\{ (t_1,t_2): \Bigl| t_1 + \frac{1}{2} \Bigr| \le \frac{1}{2},\ \Bigl| t_2 + \frac{1}{2} \Bigr| \le \frac{L}{2} \Bigr\}.$$

Therefore, $\Lambda = \tilde\Lambda \cap \mathbb Z^2$. In other words, $\widetilde{\mathbb{T}}_N = \{ \mathbf{t}=(t_1,t_2) \in \mathbb{T}_N : t_1 = n, t_2 = m L, n, m \in \Z \}$.

Now, we use the definition of bad-block event (see \eqref{badconf} for the definition of $R(\Lambda)$, and \eqref{setbadblock} for the event $\mathcal{R}$). Estimating $\mathfrak{z}_{\beta, N} (  \mathcal{R})$, note that Lemma~\ref{lem.per} still holds for this model. However, Lemma~\ref{lem.hb} should be re-written. Therefore, we will explain the steps to follow to obtain the desired estimate. From inequality \reff{ham.vert3}, we consider $| \sum_{s \in T} h(s) |  \le hL$, for each $T\in\mathcal{T}_V$. That implies, an equivalent to \reff{ham.vert4}. Remember that, $ |\partial V| =  |\partial^h V| +  |\partial^v V| \ge 4$. Finally, we have
\begin{equation}
H_{[2\times 2]}(\sigma^{V,+}) - H_{[2\times 2]} (\sigma^+)  \ge  2 J |\partial V | -h L |\partial^v V| \ge  4 (2 J -h L) .
\end{equation}

Similar to \reff{z.add.lem3}, we say that $\mathfrak{z}_{\beta, N}(\mathcal{R})  \le |R(\Lambda)| \exp ( -\beta [2J - hL ] )$. We estimate $|R(\Lambda)|\leq 2^{2L^{\prime}}$. In this case, $L^{\prime} = L$, when $L$ is even, and $L^{\prime} = L+1$, if $L$ is odd.

Therefore, as in Proposition~\ref{twopoints}, for every $ s, t \in \mathbb{T}_N$, the following inequality
\begin{equation}
\label{final2}
 \mu_{\beta,N} \left( \sigma(s) = +1, \sigma(t) = -1 \right) \le 2 c(c+1) 2^{2L^{\prime}}\exp \left\{-\frac{\beta}{2} \Bigl(2J-hL \Bigr) \right\},
\end{equation}
holds for all $\beta>k/(2J-hL)$. Here, the constant $ c $ is the same.

As a final step, we repeat considerations with conditional measures \reff{measures}, where $t \in \mathbb{T}_N$, such that $t = s + (N/2,0)$. Note that \reff{prova1a} still holds. Checking that, consider some configuration $\sigma \in \Omega_N$, such that $\sigma(\Lambda) \equiv +1$. So we can construct $\sigma^{\prime} = - \theta_{Q_2}(\sigma) \in \Omega_N$, see equation \reff{4.12}. Then, $H_N(\sigma) = H_N (\sigma^{\prime})$.

Finally, when $N \to \infty$, we obtain the infinite-volume Gibbs measures, $\mu^{+}_{\beta}$ and $\mu^{-}_{\beta}$. Therefore, by estimative to $\mathfrak{z}_{\beta, N}$ above and \reff{final2}, there exists a constant $ k> 0 $, such that for all $\beta>k/(2J-hL)$, \reff{Ntheorem1} holds true.
\end{proof}


\bigskip

\textbf{Acknowledgments}: Manuel Gonz\'alez Navarrete was supported by BecasChile, Comisi\'on Nacional de Investigaci\'on Cient\'i{}fica y Tecnol\'ogica. The research of Eugene Pechersky was carried out at the IITP-RAS at the expense of the Russian Foundation for Sciences (project No 14-50-00150). Anatoly Yambartsev thanks Conselho Nacional de Desenvolvimento Cien\-t\'i\-fi\-co e Tecnol\'ogico, CNPq (grant 307110/2013-3)  and Funda\c{c}\~ao de Amparo \`a Pesquisa do Estado de S\~ao Paulo, FAPESP (grant 2009/52379-8). The final publication is available at Springer via http://dx.doi.org/10.1007/s10955-015-1392-9

\end{document}